\documentclass[]{article}
\usepackage[utf8]{inputenc}
\usepackage{authblk}
\usepackage[margin=1in]{geometry}

\usepackage{mathpazo}
\usepackage{amssymb}
\usepackage{amsmath}
\usepackage{mathtools}
\usepackage{relsize}
\usepackage{hyperref}
\usepackage{amsthm}
\usepackage{physics}

\usepackage{subcaption}


\usepackage{comment}
\usepackage[sort&compress,numbers]{natbib}
 \bibliographystyle{unsrtnat}
\renewcommand{\Pr}{\mathbb{P}}

\renewcommand{\vec}{\operatorname{vec}}

\newcommand{\htsp}{\mspace{0.5mu}}

\newcommand{\I}{\mathbb{1}}

\renewcommand{\ip}[2]{\langle #1 , #2\rangle}

\renewcommand{\ket}[1]{
  \lvert\htsp #1 \htsp \rangle}

\renewcommand{\bra}[1]{
  \langle\htsp #1 \htsp \rvert}

\newcommand{\cA}{\mathcal{A}}
\newcommand{\cB}{\mathcal{B}}

\newcommand{\cF}{\mathcal{F}}

\newcommand{\cM}{\mathcal{M}}
\newcommand{\cN}{\mathcal{N}}

\newcommand{\cP}{\mathcal{P}}
\newcommand{\cQ}{\mathcal{Q}}
\newcommand{\cR}{\mathcal{R}}

\newcommand{\cW}{\mathcal{W}}
\newcommand{\cX}{\mathcal{X}}
\newcommand{\cY}{\mathcal{Y}}

\newcommand{\Pos}{\mathrm{Pos}}
\newcommand{\Herm}{\mathrm{Herm}}

\newcommand{\Density}{\mathrm{D}}

\newcommand{\ol}{\overline}
\newcommand{\ve}{\varepsilon}

\newcommand{\bbC}{\mathbb{C}}

\theoremstyle{definition}
\newtheorem{theorem}{Theorem}
\newtheorem{lemma}[theorem]{Lemma}

\newtheorem{definition}{Definition}

\newtheorem{proposition}{Proposition}

\title{Limits of Detecting Extraterrestrial Civilizations}
\author{Ian George, Xinan Chen, Lav R. Varshney}
\affil{Department of Electrical and Computer Engineering, and Coordinated Science Laboratory, University of Illinois Urbana-Champaign, Urbana, IL  61801}

\begin{document}

\maketitle
    
\begin{abstract}
    The search for extraterrestrial intelligence (SETI) is a scientific endeavor which struggles with unique issues--- a strong indeterminacy in what data to look for and when to do so. This has led to attempts at finding both fundamental limits of the communication between extraterrestrial intelligence and human civilizations, as well as benchmarks so as to predict what kinds of signals we might most expect. Previous work has been formulated in terms of the information-theoretic task of communication, but we instead argue it should be viewed as a detection problem, specifically one-shot (asymmetric) hypothesis testing. With this new interpretation, we develop fundamental limits as well as provide simple examples of how to use this framework to analyze and benchmark different possible signals from extraterrestrial civilizations. We show that electromagnetic signaling for detection requires much less power than for communication, that detection as a function of power can be non-linear, and that much of the analysis in this framework may be addressed using computationally efficient optimization problems, thereby demonstrating tools for further inquiry.
\end{abstract}

SETI, the search for extraterrestrial intelligence (ETI), has now been considered in a serious manner for more than half a century. The signals to be searched for, ranging from electromagnetic pulses to strange observations of exoplanets, perhaps even artificial ones \cite{Arnold05a, Arnold05b, Arnold13}, lie in a continually growing set \cite{reyes2019comprehensive,sheikh_2020,wright2021strategies,houston2021strategies}. The scientist must therefore not only propose possible signals that may be detected, but also prioritize which signals are more likely. One insight about SETI is that it may be framed as a communication task, which allows information-theoretic analysis of different possible signals in terms of resources and likelihood of success \cite{Shannon48}. Indeed, information-theoretic formulations of SETI have provided results such as: the energy per bit of information is lower for inscribed matter than for electromagnetic signals \cite{Rose04,hippke18inscribedmatter}, the limitations on the number of bits that can be sent by various media \cite{shostak11limits}, a Bayesian framework for the probability that Milky Way-based ETI-generated electromagnetic signals cross Earth's orbit \cite{grimaldi2018bayesian}, benchmarking of different information carriers for signals to/from extraterrestrial civilizations \cite{messerschmitt15,hippke2018benchmarking,hippke18inscribedmatter}, analyses of communication networks at an interstellar scale \cite{hippke2020interstellar,messerschmitt12}, and even whether the cooling of the universe could give a thermodynamic advantage to reversible computation such that aliens are simply waiting to communicate in the far future \cite{aestivation16,bennett19}.

We follow previous works in presenting both fundamental limits of SETI as well as providing new quantitative tools for benchmarking different possible ETI signals. Yet, we do not view SETI as a communication problem in which large amounts of information must be transmitted. Instead, we view SETI as a detection problem--- a yes or no answer to the question `was an extraterrestrially-engineered process detected from a given spacetime region?’ This strengthens a previous insight that an `attention-getting signal' may only require a few bits \cite{Arnold13}. Moreover, as many possible ETI signals may be only observed once rather repetitively, and we do not know when we will receive the message to begin with, we need a one-shot analysis of detection. For these reasons we view SETI as the information-theoretic task of one-shot hypothesis testing.

Using this interpretation, we present the general optimization problem for constructing the optimal decision function for given ETI processes (processes engineered by ETI) that minimizes false negatives while guaranteeing a chosen threshold of false positives. Conveniently, this problem is computationally efficient. Moreover, this characterization allows us to determine the fundamental measure for hypothesis testing. We then use this new mathematical framework to look at previously proposed ETI signals. We also include examples to illustrate conceptual points about the formulation as well as the numerical implementation of the framework. We show that the power demands of electromagnetic signals have been overestimated previously and that the advantage of inscribed matter is less obvious for detection.

\section{SETI as One-Shot Hypothesis Testing}\label{sec:SETIasOneShot}
As mentioned, \textit{communicating} and \textit{detecting} are, in general, related but distinct tasks. Moreover, detecting whom one wishes to communicate with is necessary before trying to extract information from a physical system that is encoding a message. Abstractly, one can view the receiver in SETI as taking in massive amounts of data in various forms and trying to determine if it is the result of natural processes (the null hypothesis) or processes engineered by an extraterrestrial intelligence (the alternative hypothesis). Without essential loss of generality, we can think of the receiver device as taking in many bands of electromagnetic waves.\footnote{We say without essential loss of generality in the sense that we presume sound and smell will not be used, and even if the signal were not electromagnetic itself, such as for inscribed matter, one must interact with it through electromagnetic radiation (sight, sensing equipment, etc.) which may then be data-processed to more coarse-grained properties.} Formally, by discretizing time and power of the input signals, at each time step the detector reads an outcome $y \in \cY$, where $\cY$ is a finite alphabet of possible detector readings. Therefore, if the time monitored includes $n$ time bins, the total observed sequence would be $y^{n} \in \cY^{\times n}$.

As such, SETI supposes the sequence came from either the natural process distribution,  $P_{0}$, or the ETI process distribution, $P_{1}$. Note that $P_{1}$ may be a mixture of many possible extraterrestrial received signals (detected sequences). The receiver then must `decode' whether this sequence corresponds to $P_{0}$ or $P_{1}$. As noted in \cite[Section 11.7]{cover2006}, an intuitive way of `decoding' is to define a set of sequences $Y_{0} \subset \cY^{\times n}$ so the null hypothesis is declared if a sequence in $Y_0$ is observed and the alternate hypothesis is declared otherwise. We however relax this decision function to be probabilistic, and sufficiently general to encapsulate both quantum mechanical and classical signals. (See SI for a brief introduction to quantum information theory.) Given $\cY^{\times n}$, we define a vector space $\cY^{\otimes n}$ with basis vectors pertaining to the possible sequences. Probability distributions over the original sequence space $\cY^{\times n}$, such as the null and alternative hypotheses, $P_0, P_1$, can then be written as diagonal matrices contained in the space of linear operators from $\cY^{\otimes n}$ to itself. We will denote the space of probability distributions over $\cY^{\times n}$ in the vector space by $\mathcal{P}(\cY^{\otimes n})$. A deterministic classical decision function can then be written as a projector $\Pi_{Y_0} = \sum_{y \in Y_{0}} \dyad{y}$ where we have used bra-ket notation. The type I and type II error probabilities of this decision function, $\alpha_{n}$ and $\beta_{n}$ respectively, can then be expressed as
\begin{align}\label{eqn:errorProbs}
    \alpha_{n} = \Tr(P_{0}\Pi_{Y_{0}}^{\perp})  \quad   \beta_{n} = \Tr(P_{1}\Pi_{Y_{0}}) \ ,
\end{align}
where $\Pi_{Y_{0}}^{\perp}$ is the projector onto the orthogonal complement of $\Pi_{Y_{0}}$.
Given this intuition, we then can generalize to consider any (quantum) probabilistic decision function using the transformations:
$$ P_{0} \to \rho \hspace{2mm} P_{1} \to \sigma \hspace{2mm} \Pi_{Y_{0}} \to  0 \leq A_{n} \leq \mathbb{I}_{\cY^{\otimes n}} \hspace{2mm} \Pi_{Y_{0}}^{\perp} \to \mathbb{I}_{\cY^{\otimes n}} - A_{n} \ ,$$
where $\mathbb{I}_{\cY^{\otimes n}}$ is the identity matrix on $\cY^{\otimes n}$, $\rho,\sigma$ are (finite) quantum probability distributions over $\cY^{\otimes n}$, and $A_{n}$ is an arbitrary positive semidefinite operator such that $\mathbb{I}_{\cY_n}-A_n$ is also positive semidefinite. It follows $\{A_n,\,\mathbb{I}_{\cY_n}-A_n\}$ forms a positive-operator-value-measurement (POVM), which means it represents a physically implementable measurement device. In the sequel, we remain agnostic to whether $P_{0},P_{1}, A_{n}$ are quantum or classical unless stated explicitly. We denote the set of quantum probability distributions as $\mathrm{D}(\cY^{\otimes n})$. Note that $\mathcal{P}(\cY^{\otimes n}) \subset \mathrm{D}(\cY^{\otimes n})$, and so any statement that holds in the quantum case includes the more common classical case.

Before we move forward, we note two points. First, the arrival time of a signal in SETI is unknown. As such, its starting time in the sequence is best viewed as distributed over all time steps according to a (hidden) random variable $\nu$. Furthermore, for any length of time $n \in \mathbb{N}$ that we consider, $n$ should be seen as greater than the value that upper bounds the arrival time $\nu$. This is because if the signal never arrived, one would be considering the signal sampled from $P_{0}$ regardless. Also note that if there were a fixed delay, $\nu$ would no longer be a random variable, but a constant, reducing $P_{1}$ to a distribution which samples from $P_{0}$ until some time $t$. This means the test should be unaffected by the delay since one can remove the elements of each sequence for times less than $t$. This mirrors the insight that communication capacity with fixed delay is the same as capacity with no delay \cite{Chandar12}.

\subsection{Information-Theoretic Limits of Detection of ETI}

With this formalization, we can view the SETI problem from the sender's side or from the receiver's side. We begin with the receiver's problem of constructing the optimal binary test. This depends on the electromagnetic radiation one is trying to detect, as this determines the alternative hypothesis $P_{1}$ and the cone of space the detector takes information from, and in turn determines the noisy channel and the null hypothesis $P_{0}$. With these fixed, the problem is just designing the optimal decision function. This is fundamentally a one-shot problem where at best the receiver saves the whole sequence and adaptively updates the decision function as more signals are acquired (i.e. as $n$ increases). Assuming $P_{0}$ and $P_{1}$ are not mutually exclusive/orthogonal, the error probabilities \eqref{eqn:errorProbs} cannot be zero, and so the tradeoff between these must be considered. 

Given the consequence of claiming the detection of ETI, the probability of false alarm $\alpha_{n}$ should be bounded above by $\varepsilon$, which is small, and then the probability of missing an ETI's signal $\beta_{n}$ should be minimized under this constraint \cite{Neyman1933}. Formally, the receiver aims to minimize $\beta_{n}$ while guaranteeing $1-\alpha_{n} \geq 1 - \varepsilon$ for $\varepsilon \in [0,1)$. To write the optimization problem cleanly, note that
\begin{align}\label{eqn:alphaIdentity}
\alpha_{n} = \Tr(P_{0}(\mathbb{I}_{\cY^{n}} - A_{n})) = 1 - \Tr(P_{0}A_{n}) \ , 
\end{align}
where the second equality follows from the unit trace of $P_{0}$, as it is a (possibly quantum) probability distribution. By the definition of $\beta_{n}$, \eqref{eqn:errorProbs}, and the identity \eqref{eqn:alphaIdentity}, the optimization problem that determines the optimal $\beta_{n}$ in this setting is:
\begin{align}\label{eqn:minBeta}
    \xi^{\ve}_{H}(P_{0}||P_{1}) := \underset{\substack{0 \leq A_{n} \leq \mathbb{I}_{\cY^{n}} \\ \Tr(P_{0}A_{n}) \geq 1 - \varepsilon}}{\min} \Tr(P_{1} A_{n}) \ .
\end{align} 
Note that if $P_{0}, P_{1}$ are classical, the optimal decision function $A_{n}$ is also classical. Moreover, in this case, the optimizer is guaranteed to be diagonal, so \eqref{eqn:minBeta} reduces to a linear program (LP) which can be efficiently solved for large data sets. More generally, \eqref{eqn:minBeta} is always a semidefinite program (SDP), so if $\cY^{n}$ is small, it can be efficiently evaluated. Most interestingly, \eqref{eqn:minBeta} is the argument of the $\varepsilon$-hypothesis testing relative divergence $D^{\varepsilon}_{H}(P_{0}||P_{1}) := -\log(\xi^{\ve}_{H}(P_{0}||P_{1}))$ \cite{wang12}, which is therefore the fundamental limit for ETI detection.

The generalized Quantum Stein's Lemma \cite{generalizedQuantumStein} determines the fundamental limit of one-shot hypothesis testing for a large class of approximately repetitive signals. Informally,\footnote{See the Supplementary Information for the formal statement.} it states that if for any number of time steps $n \in \mathbb{N}$, the set of possible alternative hypotheses, $\cM_{n}$, is closed, convex, only contains permutation-invariant hypotheses, and satisfies a few other consistency conditions, then there exists a sequence of decision functions $\{A_{n}\}_{n \in \mathbb{N}}$ such that the asymptotic type II error $\beta_{n}$ is given by:
\begin{align}\label{eqn:genQuantumStein}
\lim_{n \to \infty} \frac{1}{n} D^{\ve}_{H}(P_{0}^{\otimes n}||P_{1,n}) = \lim_{n \to \infty} \frac{1}{n} \underset{P_{1,n} \in \cM_{n}}{\min} D(P_{0}||P_{1,n}) \ .
\end{align}
The requirement that the null hypothesis is identically and independently distributed (i.i.d.\ ) is not restrictive since natural processes, such as noise from space, are i.i.d.\ The primary limitation of this theorem is that it requires the alternative hypotheses to be permutation-invariant over time. However, for repetitive signals, this seems reasonable as one can often model the initial signal as i.i.d.\ over some time scale, and as long as the noise the signal experiences is memoryless, the received signal will be i.i.d. 

In the case the ETI signals that are finite time, we prove convergence of the decision function (see SI for derivation).
\begin{theorem}\label{thm:finite-length-convergence}
For all $\varepsilon \in [0,1]$, a finite set $\mathcal{Q}$ of signals with finite maximum length and a finite set of possible i.i.d.\ null hypotheses has an optimal asymptotic type II error achieved in finite time. Moreover, it may be determined using a semidefinite program.
\end{theorem}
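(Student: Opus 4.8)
The plan is to show that the optimal type II error, viewed as a function of the monitoring length $n$, becomes constant once $n$ exceeds a finite threshold $M$, and that this stabilized value is exactly the program \eqref{eqn:minBeta} evaluated at $n = M$. The underlying mechanism is that a signal of finite length carries only a bounded amount of discriminating information: watching longer eventually only accumulates noise that is distributed identically under both hypotheses, so the type II error plateaus rather than decaying (in particular the Stein rate $\frac{1}{n}D^{\ve}_{H}$ tends to $0$ here). First I would fix the threshold. Finiteness of $\cQ$ gives a finite maximum signal length $L$, and the arrival time $\nu$ is supported on $\{1,\dots,N\}$ for some finite $N$ (recall $n$ is taken larger than the upper bound on $\nu$). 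Hence every signal, for every arrival time, is supported within the first $M := N + L - 1$ bins, and every bin of index exceeding $M$ is distributed according to the i.i.d.\ null. Marginalizing over $\nu$ and over the choice of signal, each mixture component factorizes past bin $M$ with the \emph{same} tail, so for all $n \ge M$,
\begin{equation*}
P_{0}^{\otimes n} = P_{0}^{\otimes M} \otimes P_{0}^{\otimes (n-M)}, \qquad P_{1,n} = P_{1,M} \otimes P_{0}^{\otimes (n-M)},
\end{equation*}
where $P_{1,M}$ is the induced alternative on the first $M$ bins.

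The crux is a tensor-invariance lemma for the hypothesis-testing quantity: for any common factor $\tau$ that is a (quantum) probability distribution, $\xi^{\ve}_{H}(\rho \otimes \tau \| \sigma \otimes \tau) = \xi^{\ve}_{H}(\rho \| \sigma)$. I would prove this by data processing applied in both directions. Appending $\tau$ is a channel, so distinguishing can only get harder, giving $\xi^{\ve}_{H}(\rho \otimes \tau \| \sigma \otimes \tau) \ge \xi^{\ve}_{H}(\rho \| \sigma)$; discarding the second factor is also a channel, mapping $\rho \otimes \tau \mapsto \rho$ and $\sigma \otimes \tau \mapsto \sigma$, which yields the reverse inequality. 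Concretely, the forward bound uses the feasible test $A \otimes \mathbb{I}$, and the reverse uses $A := \Tr_{2}[(\mathbb{I} \otimes \tau^{1/2}) A_{n} (\mathbb{I} \otimes \tau^{1/2})]$, which one verifies is a valid POVM element (using $A_{n}\le\mathbb{I}$ and $\Tr(\tau)=1$) reproducing both trace values $\Tr((\rho\otimes\tau)A_{n})$ and $\Tr((\sigma\otimes\tau)A_{n})$. Applying this with $\tau = P_{0}^{\otimes(n-M)}$ gives $\xi^{\ve}_{H}(P_{0}^{\otimes n} \| P_{1,n}) = \xi^{\ve}_{H}(P_{0}^{\otimes M} \| P_{1,M})$ for every $n \ge M$, so the limiting type II error exists and is attained at the finite time $M$.

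Since the stabilized value is \eqref{eqn:minBeta} with $n = M$, and \eqref{eqn:minBeta} is always an SDP, it is a finite-dimensional SDP and hence computable. The finiteness hypotheses enter here as well: finiteness of $\cQ$ guarantees a well-defined $L$ and a finite mixture $P_{1,M}$, while a finite set of admissible i.i.d.\ nulls contributes only finitely many linear type-I constraints of the form $\Tr(P_{0}^{(j),\otimes M} A_{M}) \ge 1 - \ve$, keeping the program an SDP.

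I expect the main obstacle to be the careful bookkeeping needed to establish the common-tail factorization once $\nu$ is a genuine random variable, i.e.\ verifying that marginalizing over arrival time preserves the product structure past bin $M$ rather than correlating the tail with the signal window; the convexity and finiteness of the mixture should make this routine but it must be checked. A secondary subtlety is the composite null: with several admissible nulls the tails $P_{0}^{(j),\otimes(n-M)}$ differ across the type-I constraints, so I would need to argue that stabilization of each pairwise problem propagates to the worst-case (robust) objective, and to confirm that the data-processing argument remains valid at the boundary values $\ve \in \{0,1\}$.
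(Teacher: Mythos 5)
Your proposal is correct and reaches the same structural conclusion as the paper --- factor the problem into a signal window of finite length $M$ plus a common i.i.d.\ tail, show the hypothesis-testing quantity is invariant under tensoring a common factor, and conclude the type II error stabilizes at $n=M$ where it is the SDP \eqref{eqn:minBeta} --- but your proof of the key tensor-invariance lemma takes a genuinely different route. The paper (Lemma \ref{lemma:tensor_prod_factorization}) first sets up the multi-hypothesis program $\xi^{\ve}_{H}(\cP\|\cQ)$, derives its dual, proves strong duality via Slater's condition, and then exhibits matching primal and dual feasible points ($X^{\star}\otimes\I_{\cW}$ and $Z^{\star}\otimes\omega$) achieving the same value. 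You instead apply the data-processing inequality (the paper's Lemma \ref{lemma:DPTNI}) in both directions, with explicit witnesses $A\otimes\I$ and $\Tr_{2}[(\I\otimes\tau^{1/2})A_{n}(\I\otimes\tau^{1/2})]$; both witnesses are valid, though you have the labels swapped --- $A\otimes\I$ lifts a small-space test to the big space and gives $\xi^{\ve}_{H}(\rho\otimes\tau\|\sigma\otimes\tau)\le\xi^{\ve}_{H}(\rho\|\sigma)$, while the partial-trace construction gives the reverse. Your route is more elementary and reuses machinery the paper already proves; the paper's duality route buys a one-shot treatment of finitely many nulls and alternatives simultaneously (all type-I and type-II constraints in a single program) plus explicit dual certificates. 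The composite-null subtlety you flag at the end is real: the paper resolves it by restricting attention to matched pairs $\bigl(\rho^{\otimes n},\,\sigma\otimes\rho^{\otimes(n-k)}\bigr)$ in which the alternative's tail is the same $\rho$ as its paired null, reducing each fixed-$\rho$ subproblem separately and then asserting the composite program $\xi^{\ve}_{H}(\cP\|\cQ)$ characterizes the worst case --- a step you would still need to spell out, but no more so than the paper does.
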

\noindent Note this holds for multiple i.i.d.\ null hypotheses unlike the generalized quantum Stein's lemma. This result along with the quantum Stein's lemma covers all cases relevant to SETI except unbounded uncertain arrival time and infinite length non-periodic signals, neither of which one would expect to converge in general. We refer to the Supplementary Information for further analysis.

\paragraph*{Quantum or Classical Information}
Given these limits for the sender, we might ask if quantum signals could provide any advantage. We first note that one common issue in using quantum information to one's advantage is the need for aligned \textit{reference frames} \cite{bartlett07}. This arises because one uses quantized degrees of freedom of an object to transmit data (for example, the polarization of single photons). It follows one not only needs a \textit{codebook} for quantum information theory, but a notion of \textit{alignment} between the two parties' reference frames (for example, party $A$ may send a single photon in what they view as horizontal polarization, but at arrival it is diagonal polarization as defined by party $B$). In this work, this issue is not explicit because by defining the distributions to evaluate \eqref{eqn:minBeta}, one would have to define the hypotheses in the fixed local reference frame. We do note however this suggests whatever signal is to be sent should not rely on the alignment of reference frames, which at least complicates the advantages of quantum mechanics. The possibility of macrosopic quantum signals \cite{Friedman00} or superpositions of degrees of freedom that do not rely on a reference frame \cite{Loredo19} do, however, allow the possibility of quantum ETI signals.

If quantum signals are possible, it is further possible they provide an advantage. For example, one possibility for an advantage comes from sending highly entangled states. If the receiver assumes detecting many entangled states is not a common natural process, then under the i.i.d.\ assumption and in the asymptotic limit, the type II error rate would be non-zero for any classical null hypotheses, and the worst case asymptotic behavior would be given by the relative entropy of entanglement \cite{vedral1998entanglement} of the alternative hypothesis. This is distinct from the classical case, since if the alternative hypotheses are classical, no such general claim could be made. Yet, our current technologies would not be able to preserve entanglement between particles sent over many light years. Beyond this aspect, we note that for all relevant classical distributions, if quantum and classical signals were equally achievable and the assumed local reference frame were correct or not relevant, we can say that quantum signals could only help, as is stated in the following proposition. The proposition can be viewed as an immediate consequence of the data-processing inequality for $\xi^{\ve}_{H}$ (proven in the SI).
\begin{proposition}\label{prop:QuantumCanOnlyHelp}
For any classical distributions $P_0,P_1 \in \mathcal{P}(\cY^{\otimes n})$, if quantum signals are implementable, we can achieve at least the same optimal type II error, $\beta_{n}$, using quantum signals. Moreover, there exist cases where the advantage is strict.
\end{proposition}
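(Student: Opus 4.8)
The plan is to derive both claims of Proposition~\ref{prop:QuantumCanOnlyHelp} from the data-processing inequality for $\xi^{\ve}_{H}$, which states that for any quantum channel $\Phi$ one has $\xi^{\ve}_{H}(\Phi(\rho)||\Phi(\sigma)) \geq \xi^{\ve}_{H}(\rho||\sigma)$; since $\xi^{\ve}_{H}$ is precisely the optimal type II error of \eqref{eqn:minBeta}, a smaller value corresponds to a better (lower-error) test. The bridge between the classical and quantum pictures is the completely dephasing channel $\Delta$ that measures in the fixed standard basis $\{\ket{y}\}$ and discards coherences, namely $\Delta(X) = \sum_{y} \dyad{y} X \dyad{y}$. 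Its defining feature is that it fixes every diagonal (classical) state, $\Delta(P) = P$ for $P \in \mathcal{P}(\cY^{\otimes n})$, while collapsing any quantum state to the classical distribution one would obtain by reading it out in that basis.

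For the first claim I would model the quantum signal as any density operators $\rho,\sigma \in \mathrm{D}(\cY^{\otimes n})$ consistent with the classical scenario, in the sense that a standard-basis readout recovers the given hypotheses, $\Delta(\rho) = P_0$ and $\Delta(\sigma) = P_1$. Applying data processing to $\Delta$ then gives $\xi^{\ve}_{H}(\rho||\sigma) \leq \xi^{\ve}_{H}(\Delta(\rho)||\Delta(\sigma)) = \xi^{\ve}_{H}(P_0||P_1)$, so the optimal type II error attainable with the quantum signal is no larger than the classical one. The trivial choice $\rho = P_0$, $\sigma = P_1$ already achieves equality, so quantum signals are never worse; the real content is that other coherent preimages of $P_0,P_1$ under $\Delta$ can do strictly better.

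For the strict-advantage claim I would exhibit a single-letter example. Take $\cY = \{0,1\}$ with $\rho = \dyad{+}$ and $\sigma = \dyad{-}$, where $\ket{\pm} = (\ket{0} \pm \ket{1})/\sqrt{2}$. Dephasing erases the off-diagonal terms, so $\Delta(\rho) = \Delta(\sigma) = \mathbb{I}/2$, meaning the induced classical hypotheses coincide at the uniform distribution. For identical classical distributions the constraint $\Tr(P_0 A) \geq 1-\ve$ forces $\Tr(P_1 A) = \Tr(P_0 A) \geq 1-\ve$, hence $\xi^{\ve}_{H}(P_0||P_1) = 1-\ve$. But $\rho$ and $\sigma$ are orthogonal, so taking $A = \dyad{+}$ gives $\Tr(\rho A) = 1$ and $\Tr(\sigma A) = 0$, whence $\xi^{\ve}_{H}(\rho||\sigma) = 0$. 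For every $\ve \in [0,1)$ this is a strict gap, establishing the second claim.

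I expect the main obstacle to be conceptual rather than computational: pinning down the precise sense in which a quantum signal is ``the same'' scenario as a classical one, since that is what licenses comparing $\xi^{\ve}_{H}(\rho||\sigma)$ against $\xi^{\ve}_{H}(P_0||P_1)$ through a single channel. Once the readout map $\Delta$ is fixed as the object that simultaneously \emph{defines} the classical hypotheses and \emph{witnesses} the inequality, the rest follows immediately from data processing together with the two-dimensional example. The only caveats to flag are the standing assumptions stated before the proposition, that the coherent states $\rho,\sigma$ are physically implementable and that the fixed local reference frame used to define $\Delta$ is correct or irrelevant.
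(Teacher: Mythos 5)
Your argument is correct, and the first claim is handled exactly as in the paper: you use the completely dephasing channel (the paper calls it the pinching channel $\mathbb{P}$), note that it fixes classical distributions, and invoke data processing for $\xi^{\ve}_{H}$ to get $\xi^{\ve}_{H}(\rho||\sigma) \leq \xi^{\ve}_{H}(P_0||P_1)$ for any coherent preimages $\rho,\sigma$ of $P_0,P_1$. The only divergence is in the strict-advantage witness. You quantize \emph{both} hypotheses, taking $\rho = \dyad{+}$ and $\sigma = \dyad{-}$, which dephase to the same uniform distribution, yielding a maximal separation ($1-\ve$ classically versus $0$ quantumly). The paper instead keeps the null hypothesis classical, $P_0 = \dyad{0}$, and only replaces the classical mixture $\mathbb{I}/2$ by the coherent superposition $\dyad{+}$ as the alternative, obtaining the smaller but still strict gap $\tfrac{1}{2}\bigl(1-2\sqrt{\ve(1-\ve)}\bigr) < \tfrac{1}{2}(1-\ve)$ for $\ve \in (0,\tfrac{4}{5})$. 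Your example is cleaner and gives a larger gap, but it requires the null hypothesis itself to be a coherent pure state; in the SETI setting $P_0$ models the natural background process, which no sender controls, so the paper's asymmetric example is more faithful to the physics even though both are valid witnesses for the existential claim as literally stated. You correctly flag this conceptual point yourself in your final paragraph, and it is worth making explicit that restricting the coherence to the alternative hypothesis, as the paper does, resolves it.
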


\paragraph{Signal Design} Moving from receiver-side to sender-side analysis, the goal is to construct a signal that can be decoded with small error probability. Given the previous analysis, there are two options. The first is to simply construct a `single' signal which could not be generated by nature (with almost any probability). This leads to $P_0$ and $P_1$ being largely orthogonal, which allows for a hypothesis test such that $\alpha_{n} < \varepsilon$ and $\beta$ being small as follows from optimization problem \eqref{eqn:minBeta}. Such signals, however, may require significant energy. Examples of such `single' signals include inscribed matter \cite{Rose04} and possibly overhead meteors, which we discuss in subsections \ref{subsec:InscribedMatter} and \ref{subsec:OverheadMeteors} respectively. The second is to send a series of i.i.d.\ signals through a memoryless channel to utilize the Quantum Stein Lemma. To do so, the senders ought to conjecture a memoryless channel $\mathcal{N}$ so as to build a memoryless device $D$ such that the received message is distinguishable from the sender's assumed noise model, at least asymptotically. Formally, if the device and noise are memoryless processes, the hypotheses are of the form $P_{0} = \cN(\star)^{\otimes{n}}$ and $P_{1} = \cN(P_{D})^{\otimes n}$ where $P_{D}$ is the distribution over signals the device produces each pulse and $\star$ is the distribution when the device is off, which may be taken to be vacuum. Given these conditions, by the Quantum Stein Lemma, the transmitter aims to construct a device, i.e. distribution, such that its source distribution satisfies some set of constraints. Denoting the feasible set of devices under said constraints as $\mathcal{C}$, the optimal source is then
\begin{align}\label{eqn:SenderFundamental}
\underset{P_{D} \in \mathcal{C}}{\sup} D(\cN(\star)||\cN(P_{D})) ,
\end{align}
where we note that this optimization problem can be unbounded if there exists $P_{D} \in \mathcal{C}$ such that the null hypothesis does not lie in the support of the alternative hypothesis. In that special case, there is perfect distinguishability asymptotically. Examples of signals for which this optimization problem applies would be radio signals or the use of transits, which we discuss in subsections \ref{subsec:RadioSignals} and \ref{subsec:Transits} respectively.

\subsection{Analyzing Measured Data}\label{subsec:AnalyzingObtainedData}
Note that our discussion so far has been in terms of comparing processes as a way of evaluating preferable methods of sending/receiving ETI announcements while taking the one-shot nature of the problem seriously. Moreover, the decision function once computed could be used on incoming data, though this would be under the assumption the incoming data was truly from one of the two hypotheses. One problem that may seem somewhat distinct from this is that of having obtained data and then trying to make a decision based on this data. This does not make much of a difference in the one-shot setting as there is no difference (assuming classical data) between storing the data and then implementing the decision function \textit{once} and implementing the decision function as the data comes in:\footnote{This is only true if one does not condition on something in the observed data.} 
\begin{enumerate}
    \item Denote the obtained data by $d \in \cY^{n}$.
    \item Construct model(s) $\{M_{i}\}_{i}$ which give rise to probability distribution(s) $\{Q_{i}\}_{i}$ over $\cY^{n}$ such that the probability of $d$ is non-zero for each.
    \item Choose $\ve \in (0,1)$. For each $Q_{i}$, solve for the optimal $\beta_{n}$ error \eqref{eqn:minBeta}. While the value of $\beta_{n}$ is not relevant in the case of obtained data, the optimizer of the problem is the optimal decision function $A_{i}^{\ve}$, where we have added the superscript $\ve$ as it is also a function of $\ve$ in general. 
    \item Let $\ket{d}$ represent the obtained data $d$ in the vector space. 
    \item  If $\ket{d} \in \text{Ker}(A_{i}^{\ve})$, then the data should certainly not be considered evidence of the alternative hypothesis. 
    \item Otherwise, implement the decision function and apply it to input $d$.
\end{enumerate}
Although a positive decision is not definitive in determining the presence of ETI, as one cannot guarantee the assumed model holds, if nothing else this gives a rigorous way of eliminating possible data for any given model when $\ket{d} \in \text{Ker}(A^{\ve}_{i})$ by using our framework. Of course, for finite sets of finite length signals (Theorem \ref{thm:finite-length-convergence}) or convex sets of i.i.d.\ signals \eqref{eqn:genQuantumStein}, strong conclusions may be drawn for reasonable sets of hypotheses. In other words, the basic approach can be extended to any generalized hypothesis testing setting such as universal or composite testing \cite{Levitan02,Berta21}.

\section{Analyzing Specific Kinds of Signals}
Having introduced a new formalism for analyzing detection and announcement for SETI, we now consider previously-proposed signaling methods for announcing the existence of a civilization under this framework both in terms of fundamental limits as well as using numerical tools availed to us by $\ve$-hypothesis testing being an SDP (LP for classical distributions). We consider both orthodox and unorthodox proposals to better see the generality of framing SETI as hypothesis testing. 

\subsection{Electromagnetic Signals} \label{subsec:RadioSignals}
Perhaps the most orthodox approach to SETI is sending radio signals, though more recently the consideration of laser signals (continuous wave and laser pulse) has grown \cite{kingsley01}. Roughly speaking, in this approach the limiting factor is the power of the transmitter \cite{Shannon49}. Clearly if the signal had enough power, the signal would be detectable, much in the same way the capacity of a Gaussian channel is limited by the power. In principle there is the further issue of how many planets the civilization would like to signal at once which will increase the number of transmitters necessary (and the total amount of necessary energy). If one could generate a sufficiently powerful burst from a laser, assuming it were detected, it would be sufficient. However, it is commonly held that an ETI would more likely periodically pulse a laser at their target, due to the limitation of generating sufficient energy. For periodic signals, the longer the time-span the signal is sent, the closer one is to achieving the Stein's lemma limit in our framework. Therefore we can make predictions about an optimal transmitter under given power constraints using \eqref{eqn:SenderFundamental}. For example, in the case of an average and peak power constraint, \eqref{eqn:SenderFundamental} might be written as:
    \begin{align}\label{eqn:EMSignalFundamental}
    \underset{\substack{\mathbb{E}_{P}(P_{init}) \leq \ol{P}_{av}
    \\ f(P_{init}) \leq P_{\max}}}{\max}
    D(\mathcal{N}_{r} \circ \mathcal{N}_{t} \circ \mathcal{N}_{s}(\star^{n})||\mathcal{N}_{r} \circ \mathcal{N}_{t} \circ \mathcal{N}_{s}(P_{init})) \ ,
\end{align}
where $P_{init}$ is the initial distribution of the signal, $f$ is a function that calculates the power cost, $\overline{P}_{av}$ is the upperbound on the power, $\mathcal{N}_{r}$ is the noise at the receiver's end, $\mathcal{N}_{t}$ is the noise during the transmission, $\mathcal{N}_{s}$ is the noise from the sender's end, and $\mathbb{E}_{P}$ is the expectation of $f(P_{init})$. Fixing the noise models, this gives one a close approximation of the fundamental limit of the distinguishability of the signals and the probability of false positive detection as a function of $P_{av}$ using \eqref{eqn:SenderFundamental}.

\subsubsection{Distinguishability Does Not Universally Necessitate Strong Signal}

We now consider a simple but counterintuitive example using this equation. A more in-depth derivation is presented in the Supplementary Information. As the example is fully classical, for simplicity we view probability distributions as vectors in bra-ket notation, so that we can express the distributions by the non-zero probability sequences. 

Consider a pulsed laser. Assume one discretizes the total signal as a sequence of length $n \in \mathbb{N}$. For clarity, we let $n = 5$. The alphabet for each element of the sequence is the interval $[0,g] \subset \mathbb{N}$ by discretizing the power and choosing a cutoff for the possible power of an observed signal.\footnote{One reason for such a cutoff is tolerated input of the device.} Assuming the laser is a square pulse, the expected optimal choice of the initial distribution could be written as $\ket{0,P,0,0,0,}$, i.e. a delta distribution. We can imagine that while there is no noise at the source, there is memoryless jitter in the laser which with probability $q/2$ shifts the sequence forward or backward one time bin. We therefore define the distribution
$$P_{init} = (1-q) \ket{0,0,P,0,0} + \frac{q}{2} \left( \ket{0,P,0,0,0} + \ket{0,0,0,P,0} \right) \ . $$
We assume that the noise during travel $\mathcal{N}_{t}$ is loss-only, so for each time bin, the map $y \to \max(y-c,0)$ is applied, where $c$ is a function of the distance travelled and possibly the conditions over the travel path. Finally, we assume the noise at the receiver is the composition of two maps. First we assume the data is taken over a short enough time (as lasers can pulse reasonably quickly) that the sun is additive power so that for each time bin the map $y \to \min(y+s,g)$. The second map assumes with some probability there is any given possible sequence.\footnote{Technically, the introduction of this map is to guarantee absolute continuity for the sake of our example. The ad-hoc introduction of such a map to guarantee this is largely an aspect of the simplicity of our model. However it is in general a rigorous way to guarantee both the null and alternative hypothesis are full rank so as to guarantee a finite value, and, by the data-processing inequality for relative entropy along with the Chernoff-Stein lemma \cite[Theorem 11.8.3]{cover2006}, we know we can only have made the asymptotic error exponent worse by doing this.} This is modeled by a linear map on distributions, $\mathcal{N}_{r,2}: q \mapsto (1-\delta)q + \frac{\delta}{|\cY^{\times n}|} \vec{1}_{\cY^{\times n}}$, where $\vec{1}$ is the all-ones vector and $\delta \in (0,1)$. Given these maps, one can determine $P_0,P_1$ from $P_{init}$. Under the assumption $c<P<g-s+c$, one finds that so long as $c \neq P$, $D(P_0||P_{1})$ is the same constant. As the assumption implies $c < P$ implies $c \neq p$, the asymptotic error rate for all powers in this range is the same. 

While this model is extremely simple, and so we would not expect such independence to hold in standard cases, it exemplifies the important conceptual aspect of the problem: the goal of the signal is to distort the sample enough to be distinguishable from the pure noise case, and it is not a priori necessary that the transmission must significantly overpower the noise to achieve this. We note this point had been made previously quantitatively in \cite{kingsley93}, this example simply shows a particularly simple situation where this holds. 

Finally, while in many cases the optimal signalling device may seem obvious (pick the largest average power allowed), in more elaborate cases it may not be the case, which would give an advantage to analyzing the process using \eqref{eqn:EMSignalFundamental}. Moreover, even if the optimal is straightforward, we will see in a later example (Subsection \ref{subsec:InscribedMatter}) that in the one-shot setting the error probability of the optimal decision function may not scale linearly in resources for generating the ETI process, in which case further tradeoffs may be worth considering. 

\subsection{Near-Earth Projectiles} \label{subsec:OverheadMeteors}
A less orthodox approach of recent interest is near-earth projectiles. Most generally, we take near-earth projectile signals as the construction of any series of macroscopic objects (projectiles) which are directed in a trajectory which passes near the Earth without colliding into it. It seems unlikely that it would be most efficient to construct a large number of such objects, and it seems perhaps most rational to expect there to be only a few such projectiles in the message. In this case the one-shot nature of the problem is very important as the data will only be collected once before the projectiles continue on their overhead trajectory or burn up in the atmosphere, and so it is crucial to have some notion of error probability of a false positive for such a signal detection, which is exactly what the one-shot hypothesis testing interpretation provides.

Indeed, the ability to handle the false positive probability in this setting has become a reality given the recent interstellar object that passed through our solar system, `Oumuamua, and the debate as to whether its origins were ETI or natural \cite{Loeb21,bannister19}. While there have been arguments that `Oumuamua was of ETI origin \cite{Loeb21}, analysis concluded `Oumuamua was most likely of natural origins, while noting for some not-yet-explained aspects \cite{bannister19}. The arguments presented in \cite{bannister19} are largely about considering different observed properties of `Oumuamua and how they deviate from expected observations. This is exactly what one-shot hypothesis testing does in a mathematical sense. Note that this is trying to make a conclusion on already obtained data, and so the methodology of Subsection \ref{subsec:AnalyzingObtainedData} applies.

\subsubsection{Simple Numerical Example}
We consider a toy example of overhead meteors to show the application of our framework. Meteors often burn up in our atmosphere. Indeed, this happens consistently enough that it is used as a tool in telecommunications by bouncing signals off of the ionization trail of the meteors, known as meteor bursts \cite{meteorBurstCommunication}. Both meteors simply falling and meteor bursts are generally held to be Poisson processes, and both have data consistently collected on them, so a reasonable approach to an ETI signal would be to produce meteors that differ from how we expect, as this would at least be recorded. Therefore, assuming an ETI that knows that meteors are not uniformly distributed on every planet, an ETI may send a small number of meteors in rapid uniform succession at the Earth for a short period. It would make sense for the meteors to be small to save energy and to guarantee they do not harm the Earth. We therefore can compare the meteor detection when the ETI meteors are and are not included and look at how distinguishable the two cases are.

Mathematically, as meteor bursts are a Poisson process, for any interval $\Delta t$, $\Pr[n \text{ meteors}] = \frac{(\lambda \Delta t)^{n}}{n!}e ^{\lambda \Delta t}$, where $\lambda$ may depend on many things, such as the time of day and of year \cite{meteorBurstCommunication}. For simplicity, we assume a scaling such that $\Delta t = 1$ and assume that at least the start of the ETI message begins in this time interval. We look at the probability of missing an ETI signal $\beta_{n}$ as a function of how many ETI rocks appear over the time interval for two choices of $\lambda$ and three choices of $\varepsilon$ which corresponds to maximum allowed error probability $\alpha_{n}$ using \eqref{eqn:minBeta} to compare the original Poisson process and the Poisson process with this additive noise. We note that our numerical analysis must be finite, whereas the Poisson process has a countably infinite number of outcomes. This can be rigorously handled by truncating the tail of the distribution, given the tail property of the distribution and the data-processing inequality. For completeness, this is elaborated on in the Supplementary Information. We numerically construct results for our simple example in Figure \ref{fig:Simple_Example}.

\begin{figure}[ht]
\begin{subfigure}{\columnwidth}
  \centering
  \includegraphics{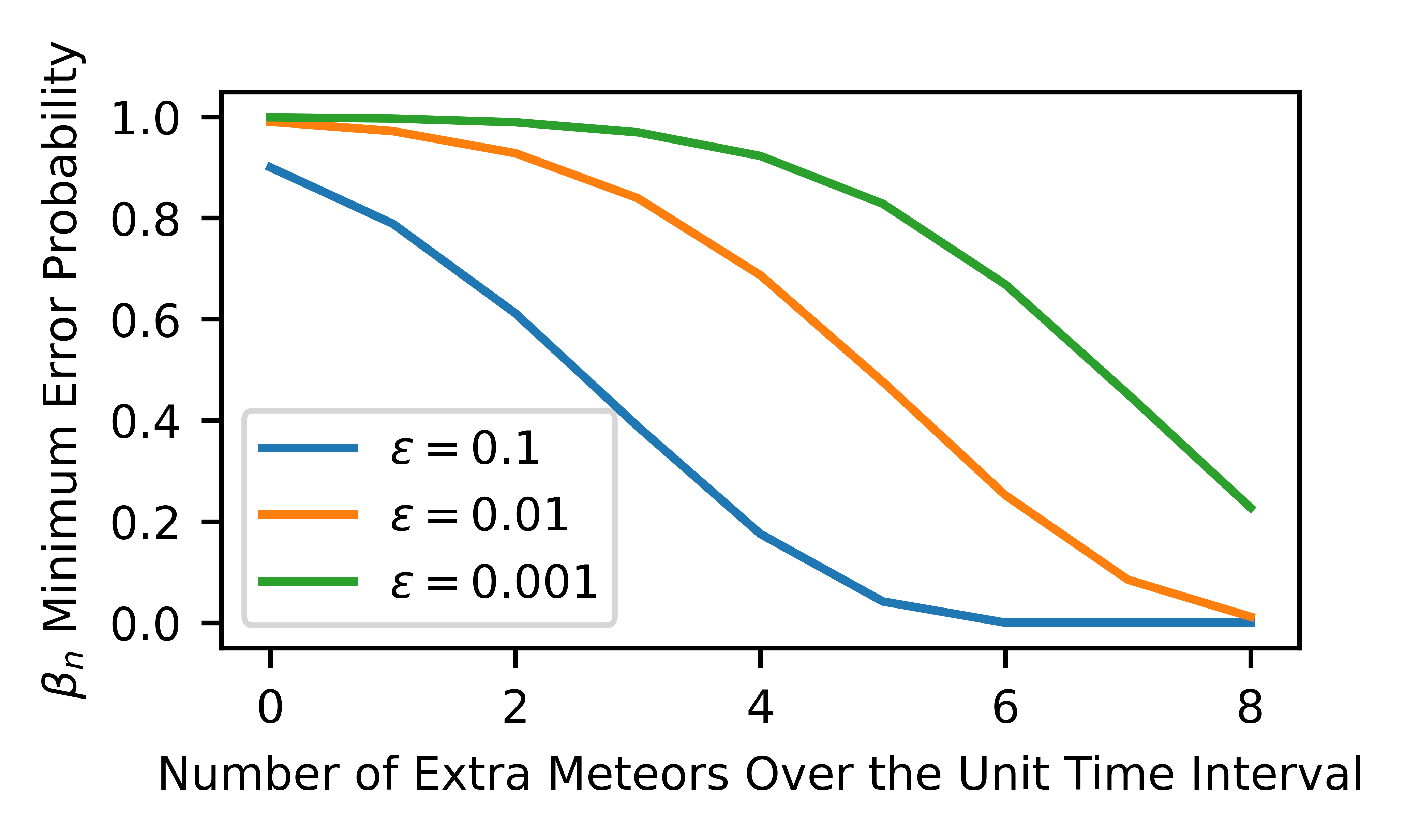}
  \caption{}
  \label{fig:sub-first}
\end{subfigure}
\begin{subfigure}{\columnwidth}
  \centering
  \includegraphics{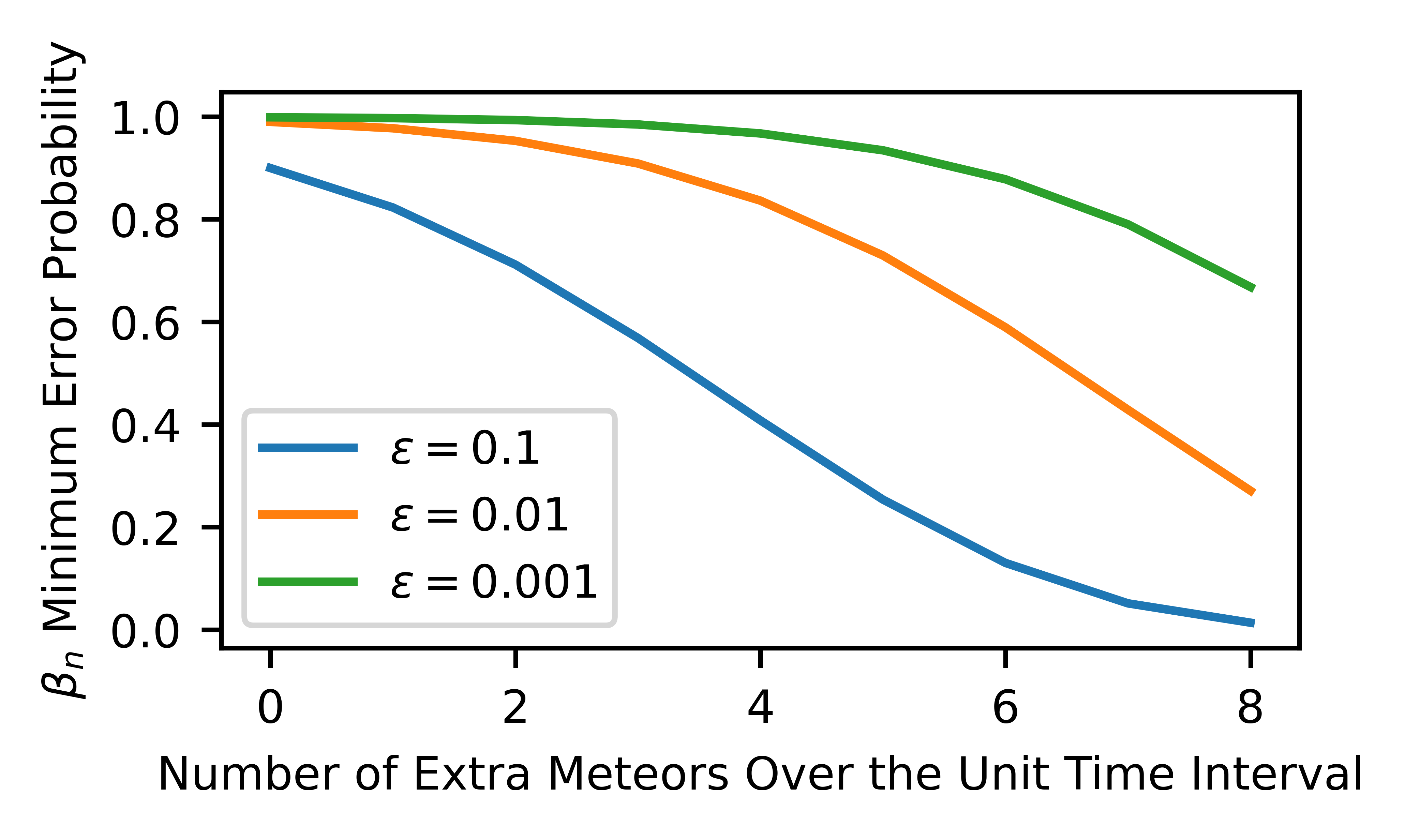}  
  \caption{}
  \label{fig:sub-second}
\end{subfigure}
\caption{Here we see the error probability of the optimal decision function for determining if there were extra detected meteors in a unit interval as a function of the extra meteors. \textbf{(a)} is for natural meteors distributed according to $\mathrm{Pois}(3)$, and \textbf{(b)} is for $\mathrm{Pois}(6)$ (i.e. larger variance). Each plot is plotted for three different tolerances of false positive probability $\alpha_{n} < \ve$.}
\label{fig:Simple_Example}
\end{figure}

General properties of the method can be seen in Figure \ref{fig:Simple_Example}. First, the optimal error probability $\beta_{n}$ does not in general scale linearly in $\alpha_{n}$. This can be seen in the graphs as $\alpha_{n} \geq 1 - \ve$ is the constraint and as $\varepsilon$ shrinks, $\beta_{n}$ does not change linearly. This is one reason it is advantageous for it to be computationally efficient to construct the optimal decision function for more general models. Second, it is highly sensitive to the null hypothesis. For example, we see that as the Poisson distribution broadens (as $\lambda$ goes from three to six) and because the ETI signal is additive, the number of meteors to signal with low risk of false negatives increases. This implies an increase in needed energy for this method, thereby allowing one to infer the feasibility/cost-benefit of this signal from the ETI's perspective conditioned on their knowledge of the receiver's local atmospheric conditions.

\subsection{Inscribed Matter} \label{subsec:InscribedMatter}
A related but distinct approach to near-earth projectiles is inscribed matter \cite{Rose04}. In \cite{Rose04}, the authors show that under many circumstances one can encode and transmit more information for less energy by encoding information densely into matter. It follows that the known advantage of inscribed matter only holds for messages carrying a lot of information, which is the opposite information-theoretic regime than the one-shot detection problem we consider here. Regardless, one can view communicating with inscribed matter as a signal for ETI detection that simply happens to have more data on-board--- a strategy that seems quite reasonable and has been studied in other settings \cite{Varshney13,Varshney19}. However, if one sends inscribed matter (that should get trapped in orbit about some planet or successfully crash land), one would expect the energy cost to largely be a function of the distinguishability from its new local environment, as was observed in our previous example. In particular, since $\beta_{n}$ increases under data processing, one expects one wants macroscopic design properties that would not get coarse-grained so as to be detectable, and this could lead to energy costs not considered in \cite{Rose04}.\footnote{In \cite{hippke18inscribedmatter}, they suggest that the optimal method is to send inscribed matter shielded in a long cylinder. It is not clear that this would be optimal for detection however.} This question, at least from the sender side, is well suited for investigation via a (non-linear) variation of the one-shot hypothesis testing optimization problem:
\begin{equation}\label{eqn:power_constrained_inscribed_matter}
\begin{aligned}
    & \underset{0 \preceq A \preceq I, P_{d} \in \mathrm{D}(\cX)}{\text{minimize}} && \ \ \langle P_{1} , A \rangle \\
    & && \langle P_0, A \rangle \geq 1 - \ve \\
    & && P_{1} = \mathcal{N}_{n}(P_{d}) \\
    & && E(P_{d}) \leq B
\end{aligned}
\end{equation}
where, $\cX$ is the Euclidean space the designed signal $P_{d}$ is defined over, $E(\cdot)$ is an energy cost function, $B$ is a constraint on the total energy, and $\mathcal{N}_{n}$ is a linear map representing the noise introduced to the design during transmission. Therefore, we believe the one-shot hypothesis testing framework remains relevant for inscribed matter approach, because, while it makes a good case for it being more energy efficient, it does not escape the detection problem that one-shot hypothesis testing encompasses.

\subsection{Transit} \label{subsec:Transits}
One final unorthodox method proposed for SETI is to look for signals of large extraterrestrially-engineered objects orbiting stars, referred to as artificial transits \cite{Arnold05a,Arnold05b,Arnold13}. The initial motivation for this approach is that we might achieve such detections in our search for exoplanets because the stellar flux detected is dependent on the shape of the transit, and so an artificial transit with strange shape could be detected. This was numerically demonstrated in \cite{Arnold05a} where the author compared various simulations of transit signals. 

Like the previously mentioned methods, the primary limitation seems to be energy. Whereas the advantage of inscribed matter was the amount of data that can be sent as a function of energy, the advantage of artificial transits is both that they could be discovered in standard astronomical research and that they can stay in orbit for a long time. The advantage of this presented in \cite{Arnold13} is that it allows for a signal (the pulsed stellar flux) over a much longer time scale than pulsed electromagnetic radiation from a laser, which is both limited by continuously generating power and the decline of a given civilization. Artificial transits also have the advantage of not needing to be aimed like the other methods generally would, as they continue to orbit around the star, i.e.\ they seem to be the best proposed broadcast signal to-date.

The duration of the orbiting process suggests another advantage of constructing transits beyond those given in \cite{Arnold05a,Arnold05b,Arnold13}, which is that it is the longest lasting i.i.d.\ signal and so the most promising to achieve the fundamental limit of hypothesis testing.\footnote{There is the previously noted caveat that the signal will only be approximated as i.i.d.\ if the noise is memory-less, but this problem is not unique to artificial transits.} This only adds to the credibility of this possible method. 

Finally, we note an open problem in the quantitative analysis of SETI through one-shot hypothesis testing that may be beneficial for future analysis. In \cite{Arnold05a} it is noted that artificial transits whose projected cross-section is triangular produce a detected stellar flux waveform that is similar to the waveform generated by a planet with rings which could complicate at least that choice of cross-section's ability to be detectable. Using the one-shot hypothesis testing optimization program on discretized waveforms could provide a stronger understanding of this particular complication. Of course, this could then be extended to analyze various signal forms, providing a quantitative benchmark for which signals forms seem more likely.

\section{Conclusion \& Outlook}
In this work we have presented a new interpretation of SETI as one-shot hypothesis testing. The crux of the argument is that communication with ETI civilizations is a related but distinct task to that of detecting said civilizations. Specifically, detection is the answer to a `yes' or `no' question, `is this process natural or generated by extraterrestrial intelligence?,' whereas communication is exchanging significantly more information. Moreover, we stress the one-shot aspect of SETI as the signals we are trying to detect may be brief and/or non-i.i.d. Using these insights we show how SETI can be formalized as one-shot hypothesis testing, and present the optimization problem which constructs the optimal decision function for a hypothesis test between an ETI process and a natural process, where optimality is in terms of minimum false negatives given some demand on the rarity of false positives for the decision function. In the special case of (mixtures of) periodic ETI signals, such as from a pulsed laser or an artificial transit, we use the generalized Quantum Stein's lemma \cite{generalizedQuantumStein} to recover the fundamental limit of hypothesis testing in these settings. This in turn dictates how the sender should design their signal, \eqref{eqn:SenderFundamental}.

To clarify that viewing SETI as a case of one-shot hypothesis testing does not hold only in the abstract, we considered various proposals for ETI signals and explained how they relate to the one-shot hypothesis testing framework. Moreover, we presented a simple numerical example to illustrate how to analyze given proposals by making use of the data-processing property of hypothesis testing and that the one-shot hypothesis testing optimization problem \eqref{eqn:minBeta} is a semidefinite program in the general case and a linear program in the common case where all data is classical.

As final remarks, we note where one could further this line of investigation. The most natural critique with this formulation is that we are considering binary hypothesis testing, and so one cannot make conclusions about multiple alternative hypotheses at the same time. Of course binary hypothesis testing stemmed from the argument that we are trying to detect a `yes' or `no,' question--- we are not concerned with comparing different alternative hypotheses, just distinguishing whether a received signal is from an ETI civilization or not. This conceptual point is lost if we consider discriminating between $M$ alternative hypotheses. Secondly, we have shown the ease of the one-shot hypothesis testing optimization problem \eqref{eqn:minBeta}. Indeed, in the case where all the data is classical, the problem becomes a linear program and, given the modern state of linear programming, this could be implemented over realistic large data sets in reasonable time for practical benefit. However, the optimization problems pertaining to cost constraints \eqref{eqn:SenderFundamental}, \eqref{eqn:power_constrained_inscribed_matter} are not so trivial as they are concave and nonlinear optimization problems respectively.\footnote{\eqref{eqn:SenderFundamental} is only a concave optimization program if the constraint set $\mathcal{C}$ is convex. Otherwise it is a nonlinear optimization problem as well.} Due to this, while we think the problems could be useful regardless and possibly even tractable some of the time, further investigation would be necessary for their application. Regardless, we believe viewing SETI as one-shot hypothesis testing can provide rigorous quantitative analysis techniques and lead to new insights in the field.

\section*{Acknowledgments}
We acknowledge Kartik Kumar Kansal as a member of the initial project that led to this work.

\section{Supplementary Information}
\subsection{Quantum Information Background and Previous Results}
In this appendix we state the quantum information definitions, constructions, and properties that we will need. We refer to \cite{WatrousBook,Wilde2011,Tomamichel2016} for further information.

\begin{definition}
Given a finite alphabet $\Sigma$, there exists a complex Euclidean space $\cX = \mathbb{C}^{|\Sigma|}$. Denote the set of positive semidefinite operators over this space as $\mathrm{Pos}(\cX)$. A quantum probability distribution over $\Sigma$ is $\rho \in \mathrm{Pos}(\cX)$ such that $\Tr(\rho) = 1$. The set of quantum probability distributions over the space $\cX$ is denoted $\mathrm{D}(\cX)$. We denote the set of classical probability distributions over the same basis by $\mathcal{P}(\cX)$.
\end{definition}
A quantum probability distribution is a generalization of a probability distribution as by the spectral decomposition theorem $\rho$ may always be diagonalized, and, as it is trace one and positive semidefinite, this diagonalization has all entries between zero and one such that they sum to one. In other words, the diagonalization of a quantum probability distribution is a classical probability distribution over some preferred basis. Furthermore, sampling i.i.d. from a quantum state $\rho \in \mathrm{D}(X)$ $n$ times is the same as imagining you have access to $n$ copies of the state, i.e. considering the state $\rho^{\otimes n}$ is the quantum version of considering $X_{i} \sim P$ $n$ times for some distribution $P$ over $\cX$, where $\otimes$ is the Kronecker product. 

\begin{definition}[Bra-Ket Notation]
For a fixed basis of a complex Euclidean space of dimension $n$, the basis vectors can be denoted by $\{\ket{0},\ket{1},\hdots,\ket{n}\}$. Where $\ket{i}$ is the column vector with a one in the $i^{th}$ element. 
\end{definition}

In the main text we make use of taking classical Cartesian products of sequences of elements of finite alphabets, $y^{n}  \in \cY^{\times n}$ to a vector space $\cY^{\otimes n}$. This can be done in the following manner as is clear from the above definition. Imagine that $y^{n} \in \cY^{\times n}$ is a $n$-length sequence of elements from a finite alphabet $\cY = \{a_{1},a_{2},\hdots, a_{k} \}$. By overloading notation, define a complex Euclidean space $\cY \cong \mathbb{C}^{n}$ whose basis vectors are $\{\ket{a_{i}}\}_{i=1}^{k}$. Then using the Kronecker product, $\otimes$, the complex Euclidean space $\cY^{\otimes n}$ is a vector space whose basis vectors are the $n$-length sequences of the basis vectors $\{\ket{a_{i}}\}$. That is to say, $\cY_{\otimes n}$ is a vector space whose basis vectors are the possible sequences contained in $\cY_{1}^{\times n}$.

\begin{definition}[Born's Rule]
Given a complex Euclidean space $\cX$ and the space of linear operators over $\cX$ to itself, $L(\cX)$, Born's rule states that for an operator $0 \leq A \leq \mathbb{I}_{\cX}$ and a quantum probability distribution $\rho \in \mathrm{D}(\cX)$, the probability of observing the property $A$ corresponds to is given by $\Tr(\rho A)$, $p(A) = \Tr(\rho A)$.
\end{definition}

\begin{definition}
Given finite alphabets $\cX, \cY$, a quantum measurement on $\cX$ with output over $\cY$ is a function $\mu: \cY \to \text{Pos}(X)$ that satisfies the constraint $$ \sum_{y \in \cY} \mu(y) = \mathbb{I}_{\cX} \ .$$
\end{definition}
This can be interpreted as saying the outcome of the measurement device are indexed by $\cY$, and the measurements that result in said outcome are defined by the operators $\mu(y)$ by Born's rule. Note that a decoder that outputs classical symbols over a finite alphabet $\cY$ from an input linear space over $\cX$ is always a quantum measurement $\mu: \cY \to \text{Pos}(X)$.

\subsection*{One-Shot Hypothesis Testing Background Results}
Here we state the results relevant to the one-shot hypothesis testing: the generalized Quantum Stein's Lemma \cite{generalizedQuantumStein}, its reduction to the standard Quantum Stein's Lemma, and the Data Processing Inequality for one-shot hypothesis testing. 
\begin{theorem}
\cite{generalizedQuantumStein}
Let $\{\cM_{n}\}_{n \in \mathbb{N}}$ be a set of sets satisfy the following for all $n \in \mathbb{N}$: 
\begin{enumerate}
    \item $\cM_{n} \subset \mathrm{D}(\cX^{\otimes n})$ is closed and convex
    \item $\cM_{n}$ contains $\sigma^{\otimes n}$ where $\sigma$ is full rank
    \item If $\rho \in \cM_{n+1}$, $\Tr_{k}(\rho) \in \cM_{n}$ for all $k \in [n+1]$ 
    \item If $\rho \in \cM_{n}, \nu \in \cM_{m}$, $\rho \otimes \nu \in \cM_{m + n}$
    \item If $\rho \in \cM_{n}$, $W_{\pi}\rho W_{\pi}^{\ast} \in \cM_{n}$ for all $\pi \in \mathcal{S}_{n}$ where $W_{\pi}$ is the unitary that permutes the $n$ copies of $\cX$ according to the permutation $\pi$.
\end{enumerate}
Under these conditions, it holds that for any $\rho \in \mathrm{D}(\cX)$, there exists a sequence of two-outomce POVMS $\{A_{n},I-A_{n}\}$, such that $\alpha_{n} \to 0$ and  for all $n \in \mathbb{N}$, $\omega_{n} \in \cM_{n}$,
\begin{align}
\lim_{n \to \infty} \frac{1}{n} D^{\ve}_{H}(\rho^{\otimes n}||\omega_{n}) = \lim_{n \to \infty} \frac{1}{n} \underset{\omega_{n} \in \cM_{n}}{\min} D(\rho^{\otimes n}||\omega_{n}) \ .
\end{align}
\end{theorem}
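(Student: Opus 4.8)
The plan is to prove the two matching bounds separately, writing $R := \lim_{n\to\infty}\frac{1}{n}\min_{\omega_n\in\cM_n} D(\rho^{\otimes n}\|\omega_n)$ for the target rate. First I would verify this limit exists. Condition~4 gives $\omega_n\otimes\omega_m\in\cM_{n+m}$, so the sequence $a_n := \min_{\omega_n\in\cM_n} D(\rho^{\otimes n}\|\omega_n)$ is subadditive, $a_{n+m}\le a_n+a_m$, and Fekete's lemma yields $R=\inf_n a_n/n$. Condition~2 makes each $a_n$ finite, since the full-rank $\sigma^{\otimes n}\in\cM_n$ is a feasible point at which the relative entropy is finite; closedness of $\cM_n$ (condition~1) together with compactness of the state space and lower semicontinuity of $D(\rho^{\otimes n}\|\cdot)$ guarantees the minimizer $\omega_n^\ast$ is attained.

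For the converse, that no test can beat rate $R$, I would argue directly from the data-processing inequality. Fix any sequence of tests $\{A_n,\,I-A_n\}$ with type-I error $\alpha_n := \Tr(\rho^{\otimes n}(I - A_n))\to 0$, and apply data processing of the relative entropy through the two-outcome measurement channel to the pair $(\rho^{\otimes n},\omega_n^\ast)$. This yields $D(\rho^{\otimes n}\|\omega_n^\ast)\ge d\bigl(1-\alpha_n\,\|\,\beta_n\bigr)\ge (1-\alpha_n)\log(1/\beta_n)-h(\alpha_n)$, where $\beta_n=\Tr(\omega_n^\ast A_n)$ is the type-II error and $d,h$ denote the binary relative entropy and binary entropy. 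Dividing by $n$, using $a_n/n\to R$ and $\alpha_n\to 0$, gives $\limsup_n \frac1n\log(1/\beta_n)\le R$; since the worst-case type-II error $\sup_{\omega_n\in\cM_n}\Tr(\omega_n A_n)$ is at least $\beta_n$, the achievable exponent is at most $R$.

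The substantive direction is achievability: constructing a single sequence of tests with $\alpha_n\to 0$ whose type-II error decays at rate $R$ \emph{uniformly} over all $\omega_n\in\cM_n$. Here I would lean on the permutation symmetry. By conditions~1 and~5, symmetrizing $\omega_n\mapsto \frac{1}{n!}\sum_{\pi\in\mathcal{S}_n}W_\pi\omega_n W_\pi^{\ast}$ keeps the state in $\cM_n$ and, by joint convexity of $D$ together with the permutation-invariance of $\rho^{\otimes n}$, does not increase the relative entropy, so $\omega_n^\ast$ may be taken permutation-invariant. A de Finetti reduction then dominates any such state by a polynomial factor times a mixture of i.i.d.\ states $\int \tau^{\otimes n}\,d\tau$, with the single-copy marginals constrained to lie in $\cM_1$ by condition~3. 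Against each i.i.d.\ component I would invoke the ordinary quantum Stein's lemma (Hiai--Petz), implemented through the Neyman--Pearson projector onto $\{\rho^{\otimes n}\ge e^{n(R-\delta)}\omega_n^\ast\}$ and the pinching inequality, to show the type-II error decays at rate at least $R-\delta$ while $\alpha_n\to 0$; letting $\delta\to 0$ then closes the gap with the converse.

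The hard part will be this last step: the minimizer $\omega_n^\ast$ depends on $n$ and is not i.i.d., so matching the regularized rate $R$ with an explicit test that works \emph{simultaneously} for every element of $\cM_n$ requires interchanging the $n\to\infty$ limit with both the optimization over $\cM_n$ and the de Finetti integral. This limit-and-optimization exchange is precisely the delicate point in the generalized setting, and the locus of the known technical gap in the original argument; I expect to need an exponential de Finetti bound with controlled error terms, or a careful measured-relative-entropy and pinching analysis, to make the uniform achievability rigorous rather than merely pointwise in $\omega_n$.
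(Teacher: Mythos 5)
First, a framing point: the paper does not prove this theorem at all --- it is quoted verbatim from the cited reference \cite{generalizedQuantumStein} (Brand\~ao--Plenio) as background in the Supplementary Information, so there is no in-paper proof to compare against. Judged on its own terms, your proposal is a fair roadmap of the original Brand\~ao--Plenio strategy (Fekete via condition~4, symmetrization via conditions~1 and~5, de Finetti reduction, reduction to i.i.d.\ Stein), but it is not a proof: you explicitly leave the achievability direction --- a single test sequence with $\alpha_n \to 0$ whose type-II error decays at rate $R$ \emph{uniformly} over $\cM_n$ --- as something you ``expect to need'' further tools for. That is the entire substance of the theorem, and, as you yourself note, it is exactly where the original argument was later found to contain a genuine gap (the interchange of the $n\to\infty$ limit with the optimization over $\cM_n$); the known repairs require fundamentally new machinery (e.g.\ information-spectrum/blurring techniques in the recent corrected proofs), not merely a ``careful pinching analysis.'' So the proposal cannot be completed along the lines sketched without importing one of those substantially different arguments.

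Two concrete technical errors in what you did write down. First, your converse only establishes the weak converse: with $\alpha_n \le \ve$ fixed (which is the regime of $D^{\ve}_{H}$ at fixed $\ve$, as in the theorem statement), the bound $d(1-\alpha_n\,\|\,\beta_n) \le a_n$ yields only $\frac{1}{n}\log(1/\beta_n) \le \frac{R}{1-\ve} + o(1)$, not $\le R + o(1)$; the stated equality for every $\ve$ requires the strong converse, which Brand\~ao--Plenio prove separately and which does not follow from a single application of data processing. Second, your claim that the de Finetti mixture has ``single-copy marginals constrained to lie in $\cM_1$ by condition~3'' is wrong: condition~3 constrains marginals of states \emph{in} $\cM_{n+1}$, but the i.i.d.\ states $\tau^{\otimes n}$ dominating a permutation-invariant $\omega_n^\ast \in \cM_n$ need not themselves belong to $\cM_n$, so their components $\tau$ inherit no membership in $\cM_1$, and consequently you have no lower bound $D(\rho\|\tau) \ge R$ for the per-component Stein argument. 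Bridging that mismatch between the de Finetti components and the constraint set is precisely the delicate step, and as written the proposal has no mechanism for it.
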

\noindent Note this can be simplified to the standard (Quantum) Stein Lemma by letting $\cM_{n} := \{\sigma^{\otimes n}\}$ for all $n \in \mathbb{N}$ as then the equation simplifies to $\lim_{n \to \infty} \frac{1}{n}D^{\ve}_{H}(\rho^{\otimes n}||\sigma^{\otimes n}) = D(\rho||\sigma)$, which implies a single equation to evaluate for the i.i.d. case. 

The requirement that $\{\cM_{n}\}_{n \in \mathbb{N}}$ are closed convex sets allows this fundamental limit to hold when one wishes to consider mixtures of multiple i.i.d. alternative hypotheses at the same time. This cannot be handled by the standard Quantum Stein Lemma as convex combinations of i.i.d.\ signals are not i.i.d.\ in general. As mentioned in the main text, for non-repetitive signals, where permutation invariance over time does not hold for many signals, the limit of $\xi^{\ve}_{H}$ as time $n$ grows must be considered directly. We note that if one models the unknown arrival time of the signal by a random variable $\nu$, the alternative hypothesis will not be permutation invariant over time. This is because the alternative hypothesis will be of the form
$$P_{1} = \Tr_{\mathcal{V}}(P_{1}^{<\nu} \otimes S \otimes P_{1}^{>\Delta t + \nu} \otimes \dyad{\nu})$$ where $P_{1}^{<\nu}$ is the distribution before the signal arrives at time $\nu$, $S$ is the distribution of the signal, $P_{1}^{>\Delta t + \nu}$ is the distribution after the signal ends at time $\Delta t + \nu$, and $\Tr_{\mathcal{V}}$ is the partial trace over the random variable $\nu$, constructing the hypothesis from which we sample. We note that there are asymptotic limit results that hold for the classical regime for handling Gauss-Markov processes \cite{Sung06} and hypothesis testing for sets of null and alternative hypotheses in the ergodic settting \cite{Luschgy93}. Moreover, in the quantum setting, if one can guarantee they are only interested in i.i.d.\ signals, \cite{Berta21} may be used instead of the generalized Quantum Stein's lemma.

In Section \ref{subsec:finite_set_hyp_testing} below, we prove for a finite set of i.i.d.\ null hypotheses and a finite set of alternative hypotheses which end in finite time and return to the i.i.d.\ null hypothesis converge in finite time (Theorem \ref{thm:finite_convergence}). This applies for any finite length message as well as unknown arrival time distributions, so long as the unknown arrival time is assumed to take a maximum value $t_{a}$ such that with the longest finite time message $t_{m}$, $t_{a} + t_{m} \leq t_{end}$, where $t_{end}$ is the final time bin for the decision function. As the generalized Quantum Stein's lemma implies convergence for i.i.d. signals, the only condition not handled by these two results is when the arrival time and message length do not satisfy the above conditions. However, in that case it is not obvious this would in general converge.

Finally, we present the following lemma which allows us to conclude that the error exponent can only increase under not only data-processing, but also post-selection of the data. We note this property has been proven in the literature previously \cite{Wang19}.
\begin{lemma}[Data-Processing of $\varepsilon$-Hypothesis Testing for Trace Non-Increasing Maps]\label{lemma:DPTNI}
Let $\mathcal{N}: \cX \to \cY$ be a completely-positive trace non-increasing map (i.e. if $P \succeq 0$, $\cN(P) \succeq 0$ and for all operators $X$, $\Tr(X) \geq \Tr(\cN(X))$). Let $P_0 \in \mathrm{D}(\cX), P_1 \succeq 0$, and $\ve \in (0,1)$. It follows, $$\xi^{\ve}_{H}(P_{0}||P_{1}) \leq \xi^{\ve}_{H}(\cN(P_{0})||\cN(P_{1})) \quad \text{and} \quad  D^{\ve}_{H}(P_{0}||P_{1}) \geq D^{\ve}_{H}(\cN(P_{0})||\cN(P_{1})) \ . $$
\end{lemma}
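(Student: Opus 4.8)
The plan is to use the standard adjoint transfer argument: I would transport a feasible measurement operator for the processed pair $(\cN(P_0),\cN(P_1))$ back to a feasible measurement operator for $(P_0,P_1)$ via the Hilbert--Schmidt adjoint $\cN^{\ast}$, checking that the objective value is preserved exactly and that both feasibility constraints survive. Since $\xi^{\ve}_{H}$ is defined as a minimization over feasible measurements, exhibiting a single feasible point for the left-hand problem whose value equals any feasible value of the right-hand problem is enough.

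First I would record the two properties of $\cN^{\ast}$, the map characterized by $\Tr(X\,\cN^{\ast}(Y)) = \Tr(\cN(X)\,Y)$ for all $X,Y$, that drive the proof. Because $\cN$ is completely positive, $\cN^{\ast}$ is completely positive, hence positive, so $Y \succeq 0$ implies $\cN^{\ast}(Y) \succeq 0$. Because $\cN$ is trace non-increasing, $\cN^{\ast}$ is subunital, meaning $\cN^{\ast}(\mathbb{I}) \preceq \mathbb{I}$: for every $X \succeq 0$ one has $\Tr\big(X(\mathbb{I}-\cN^{\ast}(\mathbb{I}))\big) = \Tr(X) - \Tr(\cN(X)) \geq 0$, and testing against all rank-one $X \succeq 0$ forces $\mathbb{I}-\cN^{\ast}(\mathbb{I}) \succeq 0$.

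Next I would take an arbitrary $B$ feasible for the right-hand problem, that is $0 \preceq B \preceq \mathbb{I}_{\cY}$ with $\Tr(\cN(P_0)B) \geq 1-\ve$, and set $A := \cN^{\ast}(B)$. Positivity of $\cN^{\ast}$ gives $A \succeq 0$, while applying $\cN^{\ast}$ to $\mathbb{I}_{\cY}-B \succeq 0$ together with subunitality gives $A = \cN^{\ast}(B) \preceq \cN^{\ast}(\mathbb{I}_{\cY}) \preceq \mathbb{I}_{\cX}$, so $0 \preceq A \preceq \mathbb{I}_{\cX}$. The defining adjoint identity rewrites the type-I constraint as $\Tr(P_0 A) = \Tr(\cN(P_0)B) \geq 1-\ve$, so $A$ is feasible for the left-hand problem, and it rewrites the objective as $\Tr(P_1 A) = \Tr(\cN(P_1)B)$. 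Minimality of $\xi^{\ve}_{H}(P_0||P_1)$ then yields $\xi^{\ve}_{H}(P_0||P_1) \leq \Tr(\cN(P_1)B)$, and taking the infimum over feasible $B$ gives $\xi^{\ve}_{H}(P_0||P_1) \leq \xi^{\ve}_{H}(\cN(P_0)||\cN(P_1))$. The divergence inequality is then immediate since $D^{\ve}_{H} = -\log \xi^{\ve}_{H}$ and $-\log$ is decreasing. I would also note the degenerate case: since $\Tr(\cN(P_0))$ may drop below $1-\ve$, the right-hand feasible set can be empty, in which case its minimum is $+\infty$ by convention and both inequalities hold trivially.

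The one delicate point, which I would flag as the crux, is the upper-bound feasibility $A \preceq \mathbb{I}_{\cX}$, as this is exactly where the trace non-increasing hypothesis enters, through subunitality of $\cN^{\ast}$; a naive argument using only positivity of $\cN^{\ast}$ would fail here. Everything else---the positivity transfer, the constraint rewriting, and the objective rewriting---follows directly from the adjoint identity and requires no estimation.
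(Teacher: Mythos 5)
Your argument is correct and is essentially identical to the paper's proof: both transport a feasible operator for the processed pair back via the Hilbert--Schmidt adjoint, using complete positivity and sub-unitality of $\cN^{\dagger}$ to verify $0 \preceq \cN^{\dagger}(A) \preceq \mathbb{I}$, and then invoke minimality. Your extra remarks on the degenerate empty-feasible-set case and on where the trace non-increasing hypothesis is actually used are accurate but not needed beyond what the paper does.
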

\begin{proof}
We prove the property for the $\xi_{H}^{\ve}$ inequality, the other follows directly from the definition of $D^{\ve}_{H}$ in terms of $\xi_{H}^{\ve}$. Let $A$ be feasible for $\xi^{\ve}_{H}(\cN(P_0)||\cN(P_{1}))$, i.e. $\Tr[A \cN(P_0)] \geq 1 - \ve$. By definition of the adjoint of a linear map, we have
\begin{equation}\label{eq:adjointfeas1}
\Tr[\cN^{\dagger}(A)P_0] = \Tr[A \cN(P_0)] \geq 1 - \ve \ . 
\end{equation}
Moreover, $\cN^\dagger$ is completely-positive and sub-unital ($\cN^{\dagger} (\mathbb{I}) \leq \mathbb{I}$) as $\cN$ is completely-positive and trace non-increasing. As $0 \preceq A \preceq \mathbb{I}$, 
\begin{equation}\label{eq:adjointfeas2}
    0 \preceq \cN^{\dagger}(A) \preceq \cN^{\dagger}(\mathbb{I}) \preceq \mathbb{I} \ .
\end{equation}
The relations \eqref{eq:adjointfeas1} and \eqref{eq:adjointfeas2} imply that $\cN^{\dagger}(A)$ is a feasible point for $\xi^{\ve}_{H}(P_0||P_1)$. Therefore, using that $\xi^{\ve}_{H}$ is a minimization problem and that we considered arbitrary freasible $A$ (so we have included the minimizer) we may conclude $\xi^{\ve}_{H}(\cN(P_0)||\cN(P_1)) \geq \xi^{\ve}_{H}(P_0||P_1)$. 
\end{proof}

\subsubsection*{Proof of Proposition 1}
\begin{proof}[Proof of Proposition 1]
The first half of the proof is an immediate consequence of Lemma \ref{lemma:DPTNI} as can be seen in the following manner. Consider any classical distributions $P_{0},P_{1} \in \mathcal{P}(\cY^{n}).$ Recall these are diagonal matrices. Consider any quantum states $\rho, \sigma \in \mathrm{D}(\cY^{n})$ such that the diagonal entries of $\rho,\sigma$ are the same as $P_0,P_1$ respectively. Consider the pinching channel $\mathbb{P}$ which zeros out every non-diagonal entry of a matrix. It is easy to verify this is trace-preserving and completely positive, and thus a channel. Thus by data-processing (Lemma \ref{lemma:DPTNI}), for any $\ve \in (0,1]$,
Consider any 
$$ \xi^{\ve}_{H}(\rho||\sigma) \leq \xi^{\ve}_{H}(\mathbb{P}(\rho)||\mathbb{P}(\sigma)) = \xi^{\ve}_{H}(P_{0}||P_{1}) \ . $$
This proves we can always do at least just as well with quantum signals.

To show there exist cases where the quantum advantage is strict, we consider the following simple but physically reasonable example. Let $P_0 = \begin{pmatrix} 1 & 0 \\ 0 & 0 \end{pmatrix}$ and $P_1 = \frac{1}{2}\begin{pmatrix} 1 & 0 \\ 0 & 1 \end{pmatrix}$. This models the scenario where the null hypothesis is signal 1, whereas the alternative hypothesis contains either signal 1 or signal 2 with equal probabilities. Now, for any $\ve>0$, the requirement that $\langle P_0,A\rangle \geq 1-\ve$ implies $\langle P_1,A\rangle \geq \frac{1}{2}\langle P_0,A\rangle \geq \frac{1}{2}(1-\ve)$, where the first inequality follows from the fact that $A$ is positive. Therefore $\beta \geq \frac{1}{2}(1-\ve)$. However, if the alternative hypothesis is $P_1 = \frac{1}{2}\begin{pmatrix} 1 & 1 \\ 1 & 1 \end{pmatrix}$ (i.e., a quantum superposition of signal 1 and 2), then we can choose $A' = \begin{pmatrix} 1-\ve & -\sqrt{\ve(1-\ve)} \\ -\sqrt{\ve(1-\ve)} & \ve \end{pmatrix}$ (note that $0\preceq A'\preceq \mathbb{I}$). This particular detection gives $\langle P_0, A'\rangle = \Tr(P_0 A') = 1-\ve$, making it a feasible decision function, and $\langle P_1, A'\rangle =\Tr (P_1 A')= \frac{1}{2}\left(1-2\sqrt{\ve(1-\ve)}\right) < \frac{1}{2}(1-\ve)$ if $\ve \in (0,\frac{4}{5})$. This proves that there exist quantum strategies that achieves error probability strictly less than $\beta$. 
\end{proof}

\subsection{Decision Function Convergence in Finite Time for Finite Sets of Finite Time Messages}\label{subsec:finite_set_hyp_testing}
In this section we prove that if all signals are finite time and the distribution returns to the i.i.d.\ null hypothesis after the signal ends, then the decision function can be guaranteed to converge. The primary lemma is to show when the alternative and null hypothesis become the same and independent of the message, the decision function will not improve. While intuitive, to the best of our knowledge, this has not been proven in the generality we consider. Our primary tool in proving this is strong duality for semidefinite programs which we summarize first. This is the presentation given in \cite{WatrousBook}, except that we have the primal problem be a minimization rather than dual. This is known to be equivalent.

\subsubsection*{Semidefinite Programs and Duality Theory}
Let $\Psi \in \mathrm{T}(\cA,\cB)$ be a Hermitian-preserving map, $A \in \Herm(\cX)$, and $B \in \Herm(\cY)$. A semidefinite program is a triple $(\Psi,A,B)$, with the following associated optimization problems: \\

\begin{minipage}{.5\linewidth}
\begin{equation}\label{eqn:primaldef}
        \begin{aligned}
                & {\text{minimize}} & & \langle A, X \rangle \\
                & \text{subject to} & & \Psi(X) = B \\
                &                   & &  X \in \Pos(\mathcal{X})
        \end{aligned}
\end{equation}
\end{minipage}%
\begin{minipage}{.5\linewidth}
\begin{equation}\label{eqn:dualdef}
        \begin{aligned}
                & {\text{maximize}} & & \langle B, Y \rangle \\
                & \text{subject to} & & \Psi^{\ast}(Y) \preceq A  \\
                &                   & &  Y \in \Herm(\mathcal{Y})
        \end{aligned}
\end{equation}
\end{minipage}
where $\Psi^{\ast}$ is the adjoint map of $\Psi$. \eqref{eqn:primaldef} is referred to as the \textit{primal problem} and \eqref{eqn:dualdef} is referred to as the \textit{dual problem}. We define 
$$\cF_{\cA} = \{X \in \Pos(\mathcal{X}) | \Psi(X) = B\} \text{ and } \cF_{\cB} = \{Y \in \Herm(\mathcal{Y}) | \Psi^{\ast}(Y) \preceq A\} \ . $$ These sets are referred to as the \textit{feasible set} of the primal problem and dual problem, respectively. 

 By \textit{weak duality}, for all semidefinite programs, the optimal value of the primal problem, denoted by $\alpha$, is always greater than or equal to the optimal value to the dual problem, denoted by $\beta$. If a semidefinite program has that $\alpha = \beta$, it is said to have \textit{strong duality}. A sufficient condition to show strong duality for SDP is Slater's condition.
\begin{theorem}{\textit{(Slater's Condition)}} \label{thm:SlatersCondition}
For a semidefinite program $(\Psi,A,B)$,  if $\alpha$ is finite and there exists a Hermitian operator $Y$ which \textit{strictly} satisfies the dual problem, that is, $\Psi^{\ast}(Y) \prec A$, then $\alpha = \beta$ and the optimal value is obtained in the primal problem.
\end{theorem}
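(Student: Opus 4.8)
The plan is to prove the standard Slater theorem for semidefinite programs by a hyperplane (Hahn--Banach) separation argument, spending the strict dual feasibility hypothesis in two places: once to obtain compactness of the primal sublevel sets (which gives \emph{attainment} of the primal optimum), and once to guarantee that a certain image set is closed (which rules out a \emph{duality gap}). By weak duality we already have $\beta \le \alpha$, so it suffices to exhibit dual-feasible $Y$ with $\langle B, Y\rangle$ arbitrarily close to $\alpha$ and, separately, to show the primal infimum is achieved.

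First I would extract coercivity from strict dual feasibility. Since $\Psi^{\ast}(Y_0) \prec A$ means $A - \Psi^{\ast}(Y_0) \succ 0$, there is $\lambda > 0$ with $A - \Psi^{\ast}(Y_0) \succeq \lambda \mathbb{I}$. For any primal-feasible $X$ (so $\Psi(X) = B$, $X \succeq 0$) the adjoint identity gives $\langle A - \Psi^{\ast}(Y_0), X\rangle = \langle A, X\rangle - \langle Y_0, B\rangle$, whence $\lambda\,\tr(X) \le \langle A, X\rangle - \langle Y_0, B\rangle$. Thus every sublevel set $\{X \succeq 0 : \Psi(X) = B,\ \langle A, X\rangle \le \alpha + 1\}$ is norm-bounded, hence compact, so the continuous objective attains its infimum $\alpha$ there. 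This already yields the ``optimal value is obtained in the primal'' conclusion.

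For strong duality I would separate a point from a convex set. Consider $\mathcal{C} = \{(\Psi(X) - B,\ t) : X \succeq 0,\ t \ge \langle A, X\rangle\} \subseteq \Herm(\cY) \oplus \mathbb{R}$, which is convex. The same coercivity bound shows that if $\Psi(X_k) \to B$ with $\langle A, X_k\rangle$ bounded, then $\{X_k\}$ is bounded and subconverges to a primal-feasible limit; this is exactly what makes $\mathcal{C}$ closed and what keeps the point $(0,\mu)$ \emph{outside} $\mathcal{C}$ for every $\mu < \alpha$. By the separating hyperplane theorem there is a nonzero functional $(Y, r) \in \Herm(\cY) \oplus \mathbb{R}$ separating $(0,\mu)$ from $\mathcal{C}$. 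Pushing $t \to +\infty$ forces $r \ge 0$, and the degenerate case $r = 0$ is excluded by primal feasibility: it would produce a nonzero $Y$ with $\Psi^{\ast}(Y) \succeq 0$ and $\langle B, Y\rangle < 0$, yet any feasible $X$ gives $0 \le \langle \Psi^{\ast}(Y), X\rangle = \langle Y, B\rangle < 0$, a contradiction. Hence $r > 0$; after normalizing $r = 1$ and fixing the sign of the functional, the two halves of the separation inequality read off as dual feasibility $\Psi^{\ast}(Y) \preceq A$ together with $\langle B, Y\rangle \ge \mu$. Letting $\mu \uparrow \alpha$ gives $\beta \ge \alpha$, and weak duality forces $\alpha = \beta$.

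I expect the crux to be the closedness step: strong duality for SDPs genuinely fails without a constraint qualification, and the entire force of $\Psi^{\ast}(Y_0) \prec A$ is spent showing that no sequence of near-feasible $X$ can drag the objective below $\alpha$ in the limit, i.e.\ that $(0,\mu)$ is bounded away from $\mathcal{C}$. The separation theorem itself and the sign/normalization bookkeeping (arguing $r>0$ and reading off feasibility) are then routine. As a sanity check, this argument is the mirror image of the more familiar ``strict primal feasibility $\Rightarrow$ dual attainment'' statement with primal and dual interchanged, so one could alternatively invoke the symmetry of SDP duality (the dual of the dual is the primal) to deduce this case from that one.
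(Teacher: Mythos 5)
The paper does not actually prove this statement: Slater's condition is quoted as background material from \cite{WatrousBook} (with primal and dual roles swapped), so there is no in-paper proof to compare against. Your argument is a correct rendition of the standard proof --- using the strictly dual-feasible $Y_0$ to get the coercivity bound $\lambda \tr(X) \le \langle A, X\rangle - \langle Y_0, B\rangle$, which yields both primal attainment and closedness of the set $\mathcal{C}$ (the latter being what licenses \emph{strict} separation of $(0,\mu)$ and hence the contradiction that kills the $r=0$ case via primal feasibility) --- and it is essentially the same hyperplane-separation route taken in the cited reference, so there is nothing to correct.
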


\subsubsection{Derivation of Result}
First note that we could re-express $\xi^{\ve}_{H}(\rho_{0}||\rho_{1})$ in the following manner:  
\begin{align}\label{eqn:hyp_test_gamma}
 \min \{ \gamma : 
        \ip{X}{\rho_{0}} \geq 1 - \ve, \,
        \ip{X}{\rho_{1}} \leq \gamma, \,
        X \preceq \I, \,
        \gamma \geq 0, \, 
        X \in \Pos(\cX) 
        \} \ .
\end{align}
This obtains the same value as as for any feasible $X$ the optimal choice for $\gamma$ is $\gamma := \ip{X}{\rho_{1}}$. It is then straightforward to generalize this to consider a finite number of null and alternative hypotheses. Let $\cP, \cQ \subset \Pos(\cX)$ be finite sets with index alphabets $\Lambda, \Sigma$ respectively. Then we define the following optimization problem:
\begin{equation}\label{eqn:multi-alt-hyp-primal}
    \begin{aligned}
        \xi^{\ve}_{H}(\cP||\cQ) := \text{minimize:}\quad & \gamma & \\
        \text{subject to:}\quad & \ip{X}{P^{i}_{0}} \geq 1 - \ve & \forall i \in \Lambda &\\
        & \ip{X}{P^{j}_{1}} \leq \gamma & \forall j \in \Sigma \\
        & X \preceq \I & \\
        & \gamma \geq 0 , \, X \in \Pos(\cX) & \ ,
    \end{aligned}
\end{equation}
where it is clear that if $\cP = \{\rho_0\}, \cQ = \{\rho_{1}\}$, the problem simplifies to $\xi^{\ve}_{H}(\rho_0 || \rho_1)$, which justifies the notation. Furthermore, if $\cP, \cQ \subset \mathrm{D}(\cX)$, then $X$ is the optimal decision function such that the type I error $\alpha_{n}$ is less than $\ve$ for all null hypotheses and $\gamma$ is the minimum type II error $\beta_{n}$ that will hold for all the alternative hypotheses at once. This confirms this is the definition we want.

As we will be using duality theory, we next derive the dual problem. To do that, we convert \eqref{eqn:multi-alt-hyp-primal} into the standard form of \eqref{eqn:primaldef}. Doing this, we obtain
\begin{align*}
    A &= (1 \oplus 0_{\cX} \oplus \left( \oplus_{i=1}^{|\Lambda|} 0 \right) \oplus \left(\oplus_{j=1}^{|\Sigma|} 0 \right) \oplus 0_{\cX} \oplus 0) \in \Pos(\bbC \oplus \cX \oplus \bbC^{\oplus |\Lambda|} \oplus \bbC^{\oplus |\Sigma|} \oplus \cX \oplus \bbC) \\
    B &= \left(\oplus_{i=1}^{|\Lambda|}(1-\ve) \oplus \left(\oplus_{j=1}^{|\Sigma|} 0 \right) \oplus \I \oplus 0 \right) \in \Pos(\bbC^{\oplus |\Lambda|} \oplus \bbC^{\oplus |\Sigma|} \oplus \cX \oplus \bbC) \\
    X &= \widetilde{\text{diag}}(\gamma, X, \oplus_{i=1}^{|\Lambda|} c_{i}, \oplus_{j=1}^{|\Sigma|} d_{j}, M, f) \in \Pos(\bbC \oplus \cX \oplus \bbC^{\oplus |\Lambda|} \oplus \bbC^{\oplus |\Sigma|} \oplus \cX \oplus \bbC) \\
    \Psi(X) &:= \text{diag}\left( \left(\bigoplus_{i=1}^{|\Lambda|} \Phi_{P^{i}_{0}}(X) - c_{i} \right), \left(\bigoplus_{j=1}^{|\Sigma|}\Phi_{P^{j}_{1}}(X) - \gamma + d_{j} \right), X + M, \gamma - f \right) \ ,
\end{align*}
where $c_{i},d_{j},M,f$ are all slack variables to satisfy the equality in the standard form \eqref{eqn:primaldef}, $\Phi_{Z}(X) := \ip{Z^{\ast}}{X} = \Tr(ZX)$, $\widetilde{\text{diag}}$ means that the operator is defined over the whole space but for our purposes we only need to label these diagonal blocks, and $\text{diag}$ means that the operator is block-diagonal. 

With this, we need to obtain the adjoint map of $\Psi$. To do this, we first note that the adjoint map of $\Phi_{Z}(\cdot)$ is $\cdot \otimes Z$ for Hermitian $Z$ as can be verified:
$$ \langle \Tr(ZX), c \rangle = \langle \sum_{i} \bra{i} ZX \ket{i}, c \rangle = \langle ZX , c\I \rangle = \langle X, cZ \rangle \ , $$
where we used that we assumed $Z$ is Hermitian. Using this, one can determine that the adjoint map is given by
$$ \Psi^{\ast}(Y) = \text{diag}(w-\sum_{j=1}^{|\Sigma|}v_{j}, \hspace{1mm} \sum_{i=1}^{|\Lambda|} z_{i}P^{i}_{0} + \sum_{j=1}^{|\Sigma|} v_{j} P^{j}_{1} + Z,\hspace{1mm}
\oplus_{i=1}^{|\Lambda|} -z_{i},\hspace{1mm}
\oplus_{j=1}^{|\Sigma|} v_{j},\hspace{1mm}
Z,\hspace{1mm}
-w ) \ . $$
where we have used $Y = \widetilde{\text{diag}}(\oplus_{i=1}^{|\Lambda} z_{i}, \oplus_{j=1}^{|\Sigma|}v_{j}, Z, w)$. This allows one to write down the dual problem which after simplification is of the form 
\begin{equation}\label{eqn:multi_alt_hyp_dual_simp}
        \begin{aligned}
                & {\text{maximize}} & & (1-\ve)\|\mathbf{z}\|_{1} - \Tr(Z) \\
                & \text{subject to} & & \sum_{i=1}^{|\Lambda|} z_{i} P^{i}_0 - \sum_{j=1}^{|\Sigma|} v_{j}P^{j}_1 \preceq Z \\
                &                   & & \sum_{j = 1}^{|\Sigma|} v_{j}\leq 1, \, 0 \leq v_{j} \, \forall j \\
                &                   & & 0 \preceq \mathbf{z},Z \ .
        \end{aligned}
\end{equation}
With the primal and dual problem specified, we can now prove strong duality of the problem.
\begin{lemma}\label{lemma:strong_duality}
    \eqref{eqn:multi-alt-hyp-primal} and \eqref{eqn:multi_alt_hyp_dual_simp} satisfy strong duality.
\end{lemma}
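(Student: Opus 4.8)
The plan is to prove Lemma~\ref{lemma:strong_duality} by invoking Slater's condition (Theorem~\ref{thm:SlatersCondition}) for the standard-form SDP $(\Psi, A, B)$ constructed above. This reduces the task to establishing two facts: that the primal optimal value $\alpha$ is finite, and that there is a \emph{strictly} feasible dual point, i.e.\ a Hermitian $Y$ with $\Psi^{\ast}(Y) \prec A$.

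First I would check finiteness of $\alpha$. The primal \eqref{eqn:multi-alt-hyp-primal} minimizes $\gamma$ subject to $\gamma \geq 0$, so the value is bounded below by $0$; it therefore suffices to exhibit a single feasible point. Taking $X = \I$ and $\gamma = 1$ does this: since each $P^{i}_{0}$ is a (quantum) probability distribution, $\ip{X}{P^{i}_{0}} = \Tr(P^{i}_{0}) = 1 \geq 1 - \ve$, and likewise $\ip{X}{P^{j}_{1}} = 1 = \gamma$, while $\I \preceq \I$ and the induced slack variables $c_{i} = \ve$, $d_{j} = 0$, $M = 0$, $f = 1$ are all positive semidefinite. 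Hence $\alpha \in [0,1]$ is finite.

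The main step is constructing the strictly feasible dual point, which I would read off block-by-block from the adjoint $\Psi^{\ast}(Y)$ computed above. Strict feasibility $\Psi^{\ast}(Y) \prec A$ amounts to the simultaneous strict inequalities $w - \sum_{j} v_{j} < 1$, $-w < 0$, $-z_{i} < 0$ and $v_{j} < 0$ for every $i,j$, $Z \prec 0$, and $\sum_{i} z_{i} P^{i}_{0} + \sum_{j} v_{j} P^{j}_{1} + Z \prec 0$. The key observation is that the Hermitian slack $Z$ is a free variable appearing in the only operator-valued block, so it can be chosen to dominate the bounded positive semidefinite terms $\sum_{i} z_{i} P^{i}_{0} + \sum_{j} v_{j} P^{j}_{1}$. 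Concretely I would set $z_{i} = \delta$, $v_{j} = -\eta$, and $w = \eta$ for small $\delta, \eta > 0$, and $Z = -\mu \I$ with $\mu$ chosen large enough that $\sum_{i} \delta P^{i}_{0} - \eta \sum_{j} P^{j}_{1} \prec \mu \I$; one then checks each of the above inequalities holds strictly (the scalar blocks reducing to $(1 + \abs{\Sigma})\eta < 1$, $\eta > 0$, and $\delta > 0$). Slater's condition then gives $\alpha = \beta$ with the primal optimum attained, proving the lemma.

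I do not anticipate a deep obstacle: the content is entirely in the existence of a strictly feasible dual point, and the freedom in the slack $Z$ makes this immediate. The only real care required is the bookkeeping of the block-diagonal structure and the sign conventions of $\Psi^{\ast}(Y) \preceq A$, so that strictness is verified in every block at once, together with the (routine) use of the hypotheses being density operators to make $X = \I$ primal feasible.
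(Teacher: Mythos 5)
Your proposal is correct and follows essentially the same route as the paper: both invoke Slater's condition and exhibit an explicit strictly feasible dual point in which the operator-valued slack $Z$ is chosen to strictly dominate the bounded combination of the hypotheses (the paper takes $z_i = v_j = \delta$ with $Z = 2\delta(\sum_i P_0^i - \sum_j P_1^j)^{+} + \delta\I$ in the simplified dual, which matches your $Z = -\mu\I$ choice up to the sign convention of the raw adjoint). Your additional step of exhibiting the primal feasible point $X = \I$, $\gamma = 1$ to rule out infeasibility is a small but harmless refinement of the paper's remark that $\alpha$ is bounded below by zero.
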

\begin{proof}
    We will use Slater's condition (Theorem \ref{thm:SlatersCondition}). First note $\alpha$ is always finite as it is lower bounded by zero. Therefore all we need is to prove there exists a dual feasible point with strict feasibility. Let $\delta \in (0,\frac{1}{|\Sigma|})$. Let $z_{i} = \delta$ for all $i$ and $v_{j} = \delta$ for all $j$. Let $Z = 2\delta (\sum_{i} P^i_0 - \sum_j P^j_1)^{+} + \delta \I$, where $(H)^{+}$ is the positive eigenspace of the Hermitian operator $H$. Then all the inequalities hold strictly, and so this is a solution which is strictly feasible.
\end{proof}

\begin{lemma}\label{lemma:tensor_prod_factorization}
    Let $\cP, \cQ \subset \Pos(\cX)$ be finite sets and $\ve \in [0,1]$. It holds $$\xi^{\ve}_{H}(\cP || \cQ ) = \xi^{\ve}_{H}(\cP \otimes \omega || \cQ \otimes \omega) \ , $$
    where $\cR \otimes \omega := \{R_{k} \otimes \omega : R_{k} \in \cR \}$ and $\omega \in \Density(\cW)$.
\end{lemma}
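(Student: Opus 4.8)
The plan is to establish the two inequalities $\xi^{\ve}_{H}(\cP \otimes \omega || \cQ \otimes \omega) \leq \xi^{\ve}_{H}(\cP || \cQ)$ and $\xi^{\ve}_{H}(\cP || \cQ) \leq \xi^{\ve}_{H}(\cP \otimes \omega || \cQ \otimes \omega)$ separately. One direction should follow almost immediately from the data-processing inequality we have already established (Lemma \ref{lemma:DPTNI}): the map $\cN(\cdot) := (\cdot) \otimes \omega$ is completely positive and trace-preserving (hence trace non-increasing), so applying it to each element of $\cP$ and $\cQ$ can only increase $\xi^{\ve}_{H}$. Since $\cN(P^i_0) = P^i_0 \otimes \omega$ and $\cN(P^j_1) = P^j_1 \otimes \omega$, this yields $\xi^{\ve}_{H}(\cP || \cQ) \leq \xi^{\ve}_{H}(\cP \otimes \omega || \cQ \otimes \omega)$. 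Strictly, Lemma \ref{lemma:DPTNI} is stated for a single pair $(P_0,P_1)$, so I would first note that the data-processing argument passes through verbatim to the multi-hypothesis program \eqref{eqn:multi-alt-hyp-primal}: feasibility of $A$ for the append-$\omega$ problem transfers to feasibility of $\cN^{\dagger}(A)$ for the original problem simultaneously across all indices $i \in \Lambda$ and $j \in \Sigma$, since the adjoint inequalities \eqref{eq:adjointfeas1}, \eqref{eq:adjointfeas2} hold hypothesis-by-hypothesis.

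For the reverse inequality, the natural move is to take an optimal decision function $X$ for $\xi^{\ve}_{H}(\cP || \cQ)$ and build a feasible point $X \otimes \I_{\cW}$ for the append-$\omega$ problem. I would verify that $0 \preceq X \otimes \I \preceq \I_{\cX} \otimes \I_{\cW} = \I$, and then compute $\Tr((X \otimes \I)(P^i_0 \otimes \omega)) = \Tr(X P^i_0)\Tr(\omega) = \Tr(X P^i_0) \geq 1 - \ve$, using $\Tr(\omega) = 1$ since $\omega \in \Density(\cW)$. Identically, $\Tr((X \otimes \I)(P^j_1 \otimes \omega)) = \Tr(X P^j_1) \leq \gamma$. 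Thus $X \otimes \I$ is feasible with the same objective value $\gamma$, giving $\xi^{\ve}_{H}(\cP \otimes \omega || \cQ \otimes \omega) \leq \xi^{\ve}_{H}(\cP || \cQ)$. Combining the two inequalities proves equality.

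I do not anticipate a serious obstacle here, as both directions reduce to elementary tensor-product trace identities together with the already-proven data-processing lemma; the only point requiring mild care is confirming that the single-pair data-processing argument lifts cleanly to the multi-hypothesis formulation \eqref{eqn:multi-alt-hyp-primal} rather than re-proving it, and that the tensored operator $X \otimes \I$ remains a valid POVM element. An alternative that avoids invoking Lemma \ref{lemma:DPTNI} altogether would be to prove the easy direction ($\leq$) via the explicit $X \otimes \I$ construction above and the other direction by tracing out the $\cW$ system, i.e.\ showing that from any feasible $\tilde{X}$ for the append-$\omega$ problem one can define $X := \Tr_{\cW}((\I \otimes \omega^{1/2})\tilde{X}(\I \otimes \omega^{1/2}))$ or a similar partial-trace contraction that preserves the constraints; I would fall back on this only if the data-processing route encountered an issue with $\omega$ being rank-deficient, but since $\Tr(\omega) = 1$ suffices for all the trace identities, no such difficulty should arise.
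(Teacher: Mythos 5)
Your proposal is correct, and it reaches the conclusion by a genuinely different route than the paper. The paper works entirely inside the SDP duality machinery it sets up for the multi-hypothesis program \eqref{eqn:multi-alt-hyp-primal}: it first proves strong duality via Slater's condition (Lemma \ref{lemma:strong_duality}), then takes the primal and dual optimizers $(\gamma^\star, X^\star)$ and $(\mathbf{z}^\star,\mathbf{v}^\star,Z^\star)$ of $\xi^{\ve}_{H}(\cP\|\cQ)$ and exhibits $X^\star\otimes\I_{\cW}$ as primal feasible and $Z^\star\otimes\omega$ (with the same $\mathbf{z}^\star,\mathbf{v}^\star$) as dual feasible for the appended program, both attaining $\gamma^\star$; weak duality then pins the optimum. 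Your two-inequality argument replaces the dual construction with the data-processing inequality applied to the CPTP map $\cN(\cdot)=(\cdot)\otimes\omega$, and your $X\otimes\I_{\cW}$ direction coincides with the paper's primal step. This is sound: the lift of Lemma \ref{lemma:DPTNI} to the multi-hypothesis program is exactly as routine as you claim, since $\cN^{\dagger}(A)=\Tr_{\cW}[(\I\otimes\omega)A]$ satisfies the type-I constraints for every $i\in\Lambda$ and preserves each objective value $\ip{A}{\cN(P_1^{j})}$ for every $j\in\Sigma$, so the minimum over $\gamma$ can only decrease. What each approach buys: yours is more elementary in that it never needs the dual program or Slater's condition for this lemma, while the paper's version additionally certifies (via the matching dual point) that $X^\star\otimes\I_{\cW}$ is not merely feasible but optimal for the appended problem, and reuses duality infrastructure it has already built for Theorem \ref{thm:finite_convergence}. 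Your fallback via a partial-trace contraction is unnecessary, as you correctly note, since only $\Tr(\omega)=1$ is ever used and rank-deficiency of $\omega$ causes no difficulty.
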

\begin{proof}
Let $(\gamma^{\star},X^{\star})$ and $(\mathbf{z}^\star, \mathbf{v}^\star, Z^\star)$ be the optimizers for the primal and dual problem of $\xi^{\ve}_{H}(\cP||\cQ)$ respectively. They obtain the same value as strong duality holds under these settings (Lemma \ref{lemma:strong_duality}). That is to say, $\gamma^\star = (1-\ve)\|\mathbf{z}^\star\| - \Tr(Z^\star)$. Consider $X = X^\star \otimes \I_{\cW}$ and $\gamma = \gamma^\star$. This is feasible for the primal problem of $\xi^{\ve}_{H}(\cP \otimes \omega || \cQ \otimes \omega)$ as
$$ \ip{X \otimes \I_{\cW}}{P_{0}^{i} \otimes \omega} = \ip{X}{P_{0}^{i}} \geq 1 - \ve \ ,$$
where we have used the multiplicativity of trace over tensor products and that $\omega$ has unit trace. Next consider $Z = Z^\star \otimes \omega$, $z_{i} = z_{i}^\star$, $v_{j} = v_{j}^\star$. This is feasible as
$$ \sum_{i}^{|\Lambda|} z^{\star}_{i}P_{0}^{i} \otimes \omega - \sum_{j}^{|\Sigma|} v^{\star}_{j}P^{j}_{1} \otimes \omega  \preceq Z^{\star} \otimes \omega \Leftrightarrow \left(\sum_{i=1}^{|\Lambda|} z^{\star}_{i} P^{i}_0 - \sum_{j=1}^{|\Sigma|} v^{\star}_{j}P^{j}_1 \right) \otimes \omega \preceq Z^{\star} \otimes \omega \ ,$$
which is always true given $(\mathbf{z}^{\star},\mathbf{v}^{\star},Z^{\star})$ being the optimizer of $\xi^{\ve}_{H}(\cP||\cQ)$. Furthermore, $(1-\ve)\|\mathbf{z}^{\star}\| - \Tr(Z^{\star} \otimes \omega) = (1-\ve)\|\mathbf{z}^{\star}\| - \Tr(Z^{\star})$ by again using the unit trace of $\omega$. Thus we have constructed primal and dual optimizers that achieve the same value, showing they are optimal, and obtain the same optimal value as $\xi^{\ve}_{H}(\cP||\cQ)$, which completes the proof.
\end{proof}

With this property of the one-shot hypothesis testing SDP proven, we may prove that the optimal decision function converges in finite time for finite size messages if the null hypothesis is i.i.d.\ and the received signal returns to the null hypothesis after the end of the transmitted message. We note this final point is not limiting as if it were not to return to the null hypothesis, then in effect the message has not ended.

\begin{theorem}\label{thm:finite_convergence}
For all $\varepsilon \in [0,1]$, given a finite set $\cQ$ of finite length signals of maximal alphabet size $\cX^{\otimes k}$ and a finite set of possible i.i.d. null hypotheses, $\cP := \{\rho^{\otimes n} : \rho \in \eta \}$, the optimal asymptotic type II error is achieved in finite time and characterized by $\xi^{\ve}_{H}(\cP||\cQ)$.
\end{theorem}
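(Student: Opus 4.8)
The plan is to prove that the map $n \mapsto \xi^{\ve}_{H}(\cP_n \| \cQ_n)$ --- where $\cP_n$ is the set of i.i.d.\ nulls and $\cQ_n$ the set of signals padded by the null beyond their support, both restricted to the first $n$ time bins --- is eventually constant, and that its stationary value is the finite-dimensional semidefinite program \eqref{eqn:multi-alt-hyp-primal} whose optimum is certified by the strong duality of Lemma \ref{lemma:strong_duality}. Let $N := t_{end}$ be the last bin in which any signal (under maximal arrival time and maximal length) can deviate from the null. The first step is the structural observation that is exactly what ``finite length and returns to the i.i.d.\ null'' buys us: for every $n \ge N$ each null factorizes as $\rho^{\otimes n} = \rho^{\otimes N} \otimes \rho^{\otimes(n-N)}$, and each alternative factorizes as an operator on the window $\cX^{\otimes N}$ tensored with the i.i.d.\ tail of the null it returns to, so that past bin $N$ the competing hypotheses agree on the tail.

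When every hypothesis shares one common i.i.d.\ tail --- in particular when there is a single null and all alternatives return to it --- the core of the proof is to strip that tail with Lemma \ref{lemma:tensor_prod_factorization}: since appending a common i.i.d.\ tail to both the null set and the signal set leaves $\xi^{\ve}_{H}$ invariant, we obtain $\xi^{\ve}_{H}(\cP_n \| \cQ_n) = \xi^{\ve}_{H}(\cP_N \| \cQ_N)$ for all $n \ge N$. Thus the optimal type II error is attained at the finite time $N$ and is computed by the window problem \eqref{eqn:multi-alt-hyp-primal}, a finite-dimensional SDP for which Lemma \ref{lemma:strong_duality} guarantees that the primal optimum is achieved and equals $\xi^{\ve}_{H}(\cP \| \cQ)$.

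To see that $N$ is a genuine stabilization time and not merely an upper bound, I would run the data-processing inequality (Lemma \ref{lemma:DPTNI}) with the partial-trace channel that discards one tail bin: this maps the length-$n$ problem onto the length-$(n-1)$ problem and shows $n \mapsto \xi^{\ve}_{H}(\cP_n \| \cQ_n)$ is non-increasing, hence monotone and bounded below by $0$, so that combining monotonicity with the invariance above pins the value from time $N$ onward.

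The main obstacle is the case of several distinct i.i.d.\ null hypotheses: their tails $\rho^{\otimes(n-N)}$ differ across $\rho \in \eta$, so there is no single $\omega$ to which Lemma \ref{lemma:tensor_prod_factorization} applies, and as $n$ grows the distinct tails become ever more distinguishable, which a priori lets $\xi^{\ve}_{H}(\cP_n \| \cQ_n)$ keep decreasing. The point that I expect to make the argument go through is that the type-I tolerance $\ve$ is \emph{fixed}: once $n$ is large enough that the mutual overlaps of the distinct null tails fall below the budget set by $\ve$, the problem decouples across backgrounds and the residual tail information can no longer be exploited, so the value saturates exactly rather than merely in the limit. Making this saturation finite and exact --- rather than asymptotic --- is the delicate step, and it is where I would combine the per-background window SDPs with a quantitative bound on the tail overlaps to identify the finite $N'$ beyond which $\xi^{\ve}_{H}(\cP_n \| \cQ_n)$ is constant and characterized by $\xi^{\ve}_{H}(\cP \| \cQ)$.
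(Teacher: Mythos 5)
Your treatment of the single-null (equivalently, common-tail) case coincides with the paper's: for $n \ge k$ one writes $\beta_{n} = \xi^{\ve}_{H}(\rho^{\otimes n}||\cQ\otimes\rho^{\otimes(n-k)})$ and strips the shared i.i.d.\ tail with Lemma \ref{lemma:tensor_prod_factorization}, so the value equals the finite window SDP $\xi^{\ve}_{H}(\rho^{\otimes k}||\cQ)$ for every $n \ge k$. Your additional monotonicity step via the partial-trace channel and Lemma \ref{lemma:DPTNI} is harmless but unnecessary: the factorization lemma already gives exact equality at every finite $n\ge k$, so there is nothing left to pin down by squeezing.

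Where your proposal genuinely diverges, and where it has a real gap, is the case of several distinct i.i.d.\ nulls. You correctly observe that Lemma \ref{lemma:tensor_prod_factorization} requires a single common $\omega$ and so does not apply when the tails $\rho^{\otimes(n-k)}$ vary over $\rho\in\eta$, but the resolution you sketch --- wait until the pairwise overlaps of the tails fall below a budget set by $\ve$ and then claim the value saturates exactly at $\xi^{\ve}_{H}(\cP||\cQ)$ --- is left unproven and is not the paper's route. The paper closes this case by a modeling restriction: the only admissible hypothesis pairs are $(\rho^{\otimes n},\,\sigma\otimes\rho^{\otimes(n-k)})$ with the \emph{same} $\rho$ in both slots, so each pair separately carries a common tail, reduces to the window problem for that $\rho$, and the simultaneous problem over $\eta$ is then identified with $\xi^{\ve}_{H}(\cP||\cQ)$. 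Moreover, the mechanism you propose points in the wrong direction: as the distinct tails become more distinguishable, a detector of the form $\sum_{\rho}X^{\star}_{\rho}\otimes\Pi_{\rho}$ --- first (approximately) identify the background from the tail, then apply the per-$\rho$ optimal window test $X^{\star}_{\rho}$ --- has type I error at most $\ve$ for large $n$ and type II error tending to $\max_{\rho}\xi^{\ve'}_{H}(\rho^{\otimes k}||\cQ)$ for any $\ve'<\ve$, which can be strictly \emph{smaller} than $\xi^{\ve}_{H}(\cP||\cQ)$, since the joint problem forces all the type-I constraints onto one test. So growing tail distinguishability does not freeze the value at $\xi^{\ve}_{H}(\cP||\cQ)$; it threatens to push it below, and any quantitative-overlap argument would have to rule out this adaptive strategy, which it cannot. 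The ``delicate step'' you defer is therefore a genuine missing idea, and the way through it is the paper's pairing convention rather than an analytic decoupling estimate.
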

\begin{proof}
Let $\varepsilon \in [0,1]$. We begin with a single null hypothesis. Consider a null hypothesis $\rho^{\otimes n} \in \Density(\cX^{\otimes n})$ for any $n \in \mathbb{N}$. Consider a finite set of possible signals, $\cQ \subset \Density(\cX^{\otimes k})$. Under the assumption that once the signal ends one starts sampling from the null hypothesis, the type II error is given by $\beta_{n}~=~\xi^{\ve}_{H}(\rho^{\otimes n}||\cQ \otimes \rho^{\otimes (n-k)})$ for all $n \geq k$. By lemma \ref{lemma:tensor_prod_factorization}, $\xi^{\ve}_{H}(\rho^{\otimes n}||\cQ \otimes \rho^{\otimes (n-k)}) = \xi^{\ve}_{H}(\rho^{\otimes k}||\cQ)$ for all $n \geq k$. Thus in this setting $\lim_{n \to \infty} \beta_{n} = \xi^{\ve}_{H}(\rho^{\otimes k}||\cQ)$.

Now consider a finite set $\eta \subset \Density(\cX)$. We assume that once the message ends, it samples from the considered null hypothesis. Formally, this means the only null and alternative hypothesis pairs assumed acceptable are of the form $\left(\rho^{\otimes n}, \sigma \otimes \rho^{\otimes (n-k)}\right)$ where $\rho \in \eta, \sigma \in \cQ$, and we stress that $\rho$ is the same in both parts of the pair. It follows for fixed $\rho \in \eta$, the relevant quantity is $\xi^{\ve}(\rho^{\otimes k}||\cQ)$ as this is the single null hypothesis case. This holds for all $\rho \in \eta$. It follows that the minimal asymptotic type II error which has at most $\ve$ type I error for $\cP := \{\rho^{\otimes k} : \rho \in \eta \}$ is given by $\xi^{\ve}_{H}(\cP||\cQ)$. That is, in this setting, $\lim_{n \to \infty} \beta_{n} = \xi^{\ve}_{H}(\cP||\cQ)$.
\end{proof}
    
\subsection{Models for Examples}
In this section we provide complete calculations of the examples in the text.

\subsubsection*{Distinguishability Does Not Universally Necessitate Strong Signal}
We consider a finite alphabet $\cY := [0,g] \subset \mathbb{N}$ and sequence $y^{n} \in \cY^{\times n}$, where $n \in \mathbb{N}$. The distribution over this sample space will be sampled an infinite number of times in an iid manner. As noted in the main text, the whole discussion will hold for sequences of any length $n \in \mathbb{N}$, but we let $n =5$ for clarity. Following the exposition in the main text, we consider a signal generated by a laser which sends a pulse in one time bin of every five, but suffers from time jitter so that the pulse may be shifted by one time bin forward or backward with probability $q/2$ for both. This leads to an initial probability distribution
$$P_{init} = (1-q) \ket{0,0,P,0,0} + \frac{q}{2} \left( \ket{0,P,0,0,0} + \ket{0,0,0,P,0} \right) \ . $$
We assume that the noise during travel $\mathcal{N}_{t}$ is loss-only and acts on every time bin identically by removing an amount of power $c$, which is a function of the distance travelled and possibly the conditions over the travel path. The map may be defined on its action on the basis sequences as 
$$\mathcal{N}_{t}:  \ket{y_{1},y_{2},\hdots,y_{5}} \mapsto \ket{\max(y_{1} - c,0),\max(y_{2} - c,0), \hdots, \max(y_{5}-c,0)} \ , $$ for all sequences $(y_1, \hdots, y_5) \in \cY^{\times 5}$. We assume the noise at the receiver is the composition of two maps. First we assume the data is taken over a short enough time that the sun is additive power and can be described as the linear map 
$$\mathcal{N}_{r,1}:  \ket{y_{1},y_{2},\hdots,y_{5}} \mapsto  \ket{\min(y_{1} + s,g), \min(y_{2} + s,g), \hdots , \min(y_{5} + s,g)}$$ for all sequences $(y_1, \hdots , y_5) \in \cY^{\times 5}$. The second map assumes with some probability there is any given possible sequence. \footnote{Technically, we assume the existence of this map so as to guarantee the support of the alternative hypothesis is contained in the support of the null hypothesis for the sake of our example. The ad-hoc introduction of such a map to guarantee this is largely an aspect of the simplicity of our model. However it is in general a rigorous way to guarantee both the null and alternative hypothesis are full rank so as to guarantee a finite value, and, by the data-processing inequality for relative entropy along with the Chernoff-Stein lemma \cite[Theorem 11.8.3]{cover2006}, we know we can only have made the asymptotic error exponent worse by doing this.} This may be defined as $$\mathcal{N}_{r,2}: q \mapsto (1-\delta)q + \frac{\delta}{|\cY^{\times n}|} \vec{1}_{\cY^{\times n}} \ ,$$ 
where $\vec{1}$ is the vector of all $1$'s and $q$ is any probability distribution over $\cY^{\times n}$. Given these maps, and under the assumption $c<P<g-s+c$, we have our null hypothesis and alternative hypothesis are:
\begin{align*}
P_0 &= (1-\delta)\ket{s,s,s,s,s,s} + \frac{\delta}{|\cY^{n}|} \vec{1}_{\cY^{n}} \\ P_{1} &= (1-\delta) \left[ (1-q) \ket{s,s,x,s,s,s} + \frac{q}{2} \left( \ket{s,x,s,s,s,s} + \ket{s,s,s,x,s,s} \right) \right] + \frac{\delta}{|\cY^{n}|} \vec{1}_{\cY^{n}} \ ,
\end{align*}
where $x = P - c + s$. Then using the definition the of the KL divergence along with the fact that $P_0(x) \neq P_1(x)$ only for the four specificied sequences, we have
\begin{align*}
    D(P_0||P_1) = \begin{cases} 
                        -\left[1-\delta + \kappa \right]\log(\frac{\kappa}{1-\delta + \kappa}) - \kappa \left[ \log(1 + \frac{(1-\delta)(1-q)}{\kappa}) + 2 \log(1 +  \frac{q(1-\delta)}{2\kappa}) \right] & \text{ if } P \neq c \\
                        0 & \text{ if } P = c 
                  \end{cases} \ ,
\end{align*} 
 where $\kappa := \frac{\delta}{|\cY^{n}|}$. Noting that $c < P$ was assumed, we see that the relative entropy is always a non-zero constant, which completes the derivation.

\subsubsection*{Overhead Meteor Numerics Derivation}
As noted in the main text, a Poisson distribution $p: p(n) = \frac{(\lambda \Delta t)^{n}}{n!}e ^{\lambda \Delta t}~\forall n \in \mathbb{N}$  has a countably infinite number of possible outcomes, but our numerical method requires finite dimensional distributions. This can be handled in the following manner: first we consider an alternate distribution $\tilde{p}$ which is the same as the Poisson distribution until an arbitrarily large but finite number $\tilde{n}-1$. Then $\tilde{p}(\tilde{n}) = 1-F_{p}(\tilde{n}-1)$ where $F_{p}(\cdot)$ is the cumulative distribution of the Poisson. Thus $\tilde{p}$ is a probability distribution on a finite number of mass points and can be made arbitrarily close to the original Poisson distribution. One could then define the alternative hypothesis with regards to $\tilde{p}$. As $\tilde{p}$ is still too large, one chooses a maximum cut-off number $n$ and projects the distributions onto this truncated space without re-normalizing. This projection is a linear trace non-increasing map, and thus for a fixed $\varepsilon \in (0,1)$, the error $\beta_{n}$ can only increase by constructing the decision function on this truncated distribution given Lemma \ref{lemma:DPTNI}. Using this truncated distribution we construct the results for our simple example without loss of rigour.

\bibliography{SETI-Bibliography.bib}

\end{document}